\documentclass[11pt,reqno]{amsart}

\usepackage{amsmath, euscript, amssymb, amsfonts}
\usepackage[matrix,arrow,curve]{xy}

\usepackage{mathrsfs}

\textheight=662.8pt 

\textwidth=413pt

\topmargin=-40pt  

\hoffset=-20pt    

\theoremstyle{plain}
\newtheorem{theorem}{Theorem}

\newtheorem{corollary}{Corollary}
\newtheorem{proposition}{Proposition}

\theoremstyle{definition}

\newtheorem*{acknowledgements}{Acknowledgements}

\theoremstyle{remark}
\newtheorem{remark}{Remark}

\renewcommand{\Re}{\mathop{\mathrm{Re}}\nolimits}
\renewcommand{\Im}{\mathop{\mathrm{Im}}\nolimits}



\newcommand{\mS}{{\mathscr{S}}}
\newcommand{\mW}{{\mathscr{W}}}
\newcommand{\mF}{{\mathscr{F}}}
\newcommand{\mB}{{\mathscr{B}}}
\newcommand{\mE}{{\mathscr{E}}}

\newcommand{\oR}{{\mathbb R}}
\newcommand{\oC}{{\mathbb C}}
\newcommand{\oZ}{{\mathbb Z}}




\newcommand{\labelT}[1]{\label{T#1}}
\newcommand{\labelP}[1]{\label{P#1}}

\newcommand{\twast}{\mathbin{\circledast}}

\renewcommand{\Re}{\mathop{\mathrm{Re}}\nolimits}
\renewcommand{\Im}{\mathop{\mathrm{Im}}\nolimits}
\newcommand{\eqdef}{\stackrel{\mathrm{def}}{=}}

\begin{document}

\title[Algebras with convergent star products and their
representations]{Algebras with convergent star products and their
representations in Hilbert spaces}
\author{M.~A.~Soloviev}
\address{I.~E.~Tamm Department of Theoretical Physics, Lebedev
Institute of Physics, Leninsky Prospect 53, 119991 Moscow, Russia}
\email{soloviev@lpi.ru}

\thanks{}
\subjclass[2000]{53D55, 81S30, 46L65, 46H15, 46H30,  46N50}

\keywords{deformation quantization,  Moyal star product, twisted
convolution, Weyl correspondence, Wick ordering, Fock-Bargmann
representation}

\begin{abstract}
We study  star product algebras of analytic functions for which
the power series defining the products converge absolutely. Such
algebras arise naturally in deformation quantization theory and in
noncommutative quantum field theory. We consider different star
products in a unifying way and  present  results on the structure
and basic properties of these algebras, which are useful for
applications. Special attention is given to the Hilbert space
representation of the algebras  and to the exact description  of
their corresponding operator algebras.
\end{abstract}

\hfill  FIAN-TD-2012-02

\medskip

\hfill J. Math. Phys. {\bf 54}, 073517 (2013)


\vspace{2cm}

\maketitle

\section{Introduction}\label{S1}

In this paper, we study  star product algebras of analytic
functions for which the power series defining the products are
absolutely convergent. We discuss the structure and some basic
properties of these algebras, consider their representations in
Hilbert spaces, and describe precisely the corresponding operator
algebras. The question of con\-vergence of star products was
previously considered in the context of deformation quantization
and specifically for the Weyl-Moyal and Wick
products~\cite{1}--\cite{4}. It acquires a new significance in
noncommutative quantum field theory (NC QFT), because taking the
nonlocal nature of the star products into account is important for
the physical interpretation of noncommutative
models~\cite{5}--\cite{8}. As shown in~\cite{6},  a causality
condition for NC QFT can be formulated through the use of  a star
product algebra associated with the light cone and having the
property of absolute convergence, and some simple models satisfy
this condition. Since in the literature there is no consensus on
the equivalence of the Weyl-Moyal and Wick-Voros formulations of
noncommutative field theory~\cite{9}--\cite{11}, it is useful  to
define and explore those properties of the convergent star product
algebras that are independent of the specific form of the star
product.

We consider different quantizations in a unifying way, starting
from  a general form of  star products defined by a constant
matrix  whose antisymmetric part is the Poisson tensor and whose
symmetric part specifies the quantization. In the case of the
Weyl-Moyal product  the symmetric part is zero. An important role
in our approach is played by function spaces that are algebras
under each of these products and consist of rapidly decreasing
functions. As well known, the usual axiomatic quantum field theory
uses the Schwartz space $S$ of all rapidly decreasing smooth
functions, on which quantum fields are defined as operator-valued
distributions. The space $S$ is an algebra under the Weyl-Moyal
product but not under the Wick product, for which the symmetric
part is pure imaginary diagonal. Moreover, any test function space
having the structure of an algebra under the Wick product must
consist of entire analytic functions. This suggests that any
attempt to develop a Wick-Voros formulation of noncommutative QFT
should be based on treating it  as an essentially  nonlocal
theory, which can also be confirmed by the simple but instructive
example of the normal ordered star product square of a free field,
considered for the Weyl-Moyal and Wick products in~\cite{8}
and~\cite{12}. We believe that using appropriate  function
algebras consisting of rapidly decreasing functions and universal
for different star products is very useful in the general theory
of quantization on symplectic linear spaces. In particular, their
operator representations associated with a representation of the
translation group are defined in an obvious way and can be
extended by continuity to larger algebras. In~\cite{3}, a faithful
representation of a Wick star product algebra with absolute
convergence is defined by means of the Gelfand-Naimark-Segal
construction. We construct Hilbert space representations of
various star product algebras explicitly and characterize the
domain of definition of the corresponding operator algebras.

The paper is organized as follows. In Sec.~\ref{S2}, we introduce
the main definitions and notation. In Sec.~\ref{S3}, we consider
the spaces $\mathcal E^\rho$ consisting of entire  functions of
order $\rho\le2$ and of minimal type, and prove that these spaces
are topological algebras under any star product $\star_\vartheta$
defined by a constant matrix $\vartheta$. In Sec.~\ref{S4}, we
show that the algebra $(\mathcal E^\rho,\star_\vartheta)$ contains
a family of subalgebras associated naturally with  cone-shaped
regions. If $\rho>1$, then $(\mathcal E^\rho,\star_\vartheta)$ has
a subalgebra $\mW^\rho$ consisting of functions rapidly decreasing
at real infinity.  These are just the functions that can serve as
test functions for NC QFT, and we explain the relation of
$\mW^\rho$ with the Gelfand-Shilov spaces $S^\beta$ used in the
previous papers~\cite{6}, \cite{8}, and \cite{12}. In
Sec.~\ref{S5}, we describe the Fourier transform of the algebra
$(\mW^\rho,\star_\vartheta)$, which is used later in Sec.~\ref{S6}
to construct Hilbert space representations of this algebra. In
Sec.~\ref{S7}, we prove that the representations of
$(\mW^\rho,\star_\vartheta)$ can be uniquely extended by
continuity to  representations of the algebra $(\mathcal
E^\rho,\star_\vartheta)$. We also characterize exactly the domain
of definition of the corresponding operator algebra in the space
$L^2$ of square  integrable functions. Analogous results for the
Fock-Bargmann space representation are derived in Sec.~\ref{S8}.
The final Sec.~\ref{S9} contains concluding remarks.

\section{Star products on symplectic linear spaces}\label{S2}

Let $X$ be  be a real $2n$-dimensional vector space equipped with
a  symplectic bilinear form $\omega$. Choosing a symplectic basis
in $X$, we  identify the form $\omega$ with its matrix
$\Omega=\begin{pmatrix} 0&I_n\\-I_n&0\end{pmatrix}$ and we denote
by $\pi$  the inverse matrix of $\Omega$. Then the Poisson bracket
of two functions on $X$ is written as
 \begin{equation}
\{f,g\}= \sum_{j,k=1}^{2n} \pi^{jk}\frac{\partial f}{\partial
x^j}\frac{\partial g}{\partial x^k}.
 \notag
\end{equation}
The Weyl-Moyal star product $f\star_{\hbar}g$ is a noncommutative
deformation of the ordinary (pointwise) product of the functions
in the direction of the Poisson bracket and is defined by
 \begin{equation}
(f\star_{\hbar}
g)(x)=f(x)\,e^{\tfrac{i\hbar}{2}\,\overleftarrow{\partial_j}\,
\pi^{jk}\,\overrightarrow{\partial_k}}g(x),
 \label{2.1}
\end{equation}
where the Planck constant  $\hbar$ plays the role of a
noncommutativity parameter, and where  the summation convention
for the repeated indices is used. In this paper, we consider star
products of the more general form
\begin{multline}
(f\star_{\hbar\vartheta}
g)(x)=f(x)\,e^{i\hbar\,\overleftarrow{\partial_j}\,
\vartheta^{jk}\,\overrightarrow{\partial_k}}g(x)=\\=
f(x)g(x)+\left.\sum_{m=1}^\infty\frac{(i\hbar)^m}{m!}\,
\left(\vartheta^{jk}\partial_{x^j}\,
\partial_{x'^k}\right)^m f(x)g(x')\right|_{\,x=x'},
 \label{2.2}
\end{multline}
where $\vartheta$ is  a constant and, in general, complex matrix.
By the classical-quantum corres\-pondence principle, the
antisymmetric part of $\vartheta$ must be equal to $\dfrac12 \pi$.
Then, and only then, the product~\eqref{2.2} satisfies the limit
relation
\begin{equation}
\lim_{\hbar\to0}\frac{1}{i\hbar}(f\star_{\hbar\,\vartheta}
g-g\star_{\hbar\,\vartheta} f)=\{f,g\}.
 \notag
\end{equation}
Therefore, in what follows we assume that
\begin{equation}
\vartheta=Q+\frac12 \pi,\qquad \pi=\Omega^{-1},\quad
Q^{jk}=Q^{kj}.
 \label{2.3}
\end{equation}
Different choices of the symmetric matrix $Q$ correspond to
different quantizations, i.e., to different rules of association
between functions on $X$ and quantum operators, including the
standard, the anti-standard, the normal, and the anti-normal
ordering rules. This is demonstrated below by constructing
explicitly  Hilbert space representations which convert the star
products to the operator product.  In  Sec.~\ref{S3}--\ref{S5} we
write $\star_\vartheta$ instead of $\star_{\hbar\vartheta}$,
assuming that the deformation parameter $\hbar$ is included in the
matrix $\vartheta$.

We note that the general form~\eqref{2.2} of star product was used
in recent papers by Blaszak and  Doma\'nski~\cite{13} and Omori et
al.~\cite{14}, but the attention of these papers was focused on
other aspects. The first of them presents the formalism of the
phase space quantum mechanics in a very systematic way, but does
not consider problems with convergence of the infinite series
representing the star products, whereas the second paper  focuses
mainly on the important case of the $\star$-exponential functions
of quadratic forms and their related Clifford algebra and does not
concern the representation of star product algebras in Hilbert
spaces.

\section{Convergence of the star products}\label{S3}

We will consider spaces of functions for which the star
product~\eqref{2.2} is absolutely convergent for any matrix
$\vartheta$. Such algebras were previously studied~\cite{1},
\cite{2} mostly for the case of Weyl-Moyal product. The
generalization to arbitrary star products in not difficult, but we
give it for completeness in the form convenient for deriving our
main results in Secs.~\ref{S4}--\ref{S8}.

Let $0<\rho<\infty$ and let   $\mathcal E^\rho(\oR^d)$ denote the
space of all smooth functions on $\oR^d$ satisfying the
inequalities
\begin{equation}
|\partial^\kappa f(x)|\leq C_L
L^{-|\kappa|}(\kappa!)^{1-1/\rho}\,e^{|x/L|^\rho},
 \label{3.1}
\end{equation}
where $L>0$ and can be chosen arbitrarily large,  $C_L$ is a
constant depending on $f$ and $L$,  $|x|=\max\limits_{1\le j\le d}
|x_j|$, and $\kappa=(\kappa_1,\dots,\kappa_d)\in \oZ_+^d$. Here
and hereafter, we use the standard multi-index notation
\begin{equation}
\partial^\kappa =\frac{\partial^{|\kappa|}}{\partial x_1^{\kappa_1}
\dots\partial x_d^{\kappa_d}},\quad
|\kappa|=\kappa_1+\dots+\kappa_d,\quad
\kappa!=\kappa_1!\dots\kappa_d!.
 \notag
\end{equation}
The topology of $\mathcal E^\rho(\oR^d)$ is defined by the family
of norms
\begin{equation}
\|f\|_L=\sup_{x,\kappa} \frac{L^{|\kappa|}|\partial^\kappa
f(x)|}{(\kappa!)^{1-1/\rho}}\,e^{-|x/L|^\rho},\quad L>0 .
 \label{3.2}
\end{equation}
We note that $\kappa!$ in~\eqref{3.1}, \eqref{3.2} can be replaced
with $\kappa^\kappa=\kappa_1^{\kappa_1}\dots\kappa_d^{\kappa_d}$,
because by  Stirling's formula we have
\begin{equation}
 e^{-|\kappa|}\kappa^\kappa\le \kappa!\le
  C_\delta(1+\delta)^{|\kappa|}e^{-|\kappa|} \kappa^\kappa
  \qquad\text{for any}\,\, \delta>0.
 \label{3.3}
\end{equation}
However the definition in form~\eqref{3.1}, \eqref{3.2} is more
convenient for what follows.

The definition of $\mathcal E^\rho(\oR^d)$ can be reformulated in
terms of complex variables. Let  $\mE^\rho(\oC^d)$ be the space of
all entire functions  of order $\le\rho$ and minimal type, i.e.
satisfying the condition
\begin{equation}
|f(z)|\leq C_\epsilon\, e^{\epsilon|z|^\rho},
 \label{3.4}
\end{equation}
where  $\epsilon>0$ and can be chosen arbitrarily small, $z=x+iy$,
and $|z|=\max\limits_{1\le j\le d} |z_j|$. The natural topology on
$\mE^\rho(\oC^d)$ is defined by the countable system of norms
\begin{equation}
\|f\|_\epsilon=\sup_z|f(z)| e^{-\epsilon|z|^\rho},\quad
\epsilon=1, \tfrac12,\tfrac13,\dots .
 \label{3.5}
\end{equation}
and is hence metrizable. It follows from the elementary properties
of analytic functions of several variables that $\mE^\rho(\oC^d)$
is complete and therefore a Fr\'echet space.

\begin{proposition}\labelP{1}  The space $\mE^\rho(\oC^d)$ is canonically
isomorphic to $\mathcal E^\rho(\oR^d)$.
\end{proposition}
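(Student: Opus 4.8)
The plan is to construct the isomorphism as the restriction map $f\mapsto f|_{\oR^d}$ from $\mE^\rho(\oC^d)$ to $\mathcal E^\rho(\oR^d)$ and show it is a bijective isomorphism of Fr\'echet spaces. First I would check that this map is well defined: if $f$ is entire and satisfies \eqref{3.4}, then by the Cauchy integral formula over polydiscs of radius $r$ centered at $x\in\oR^d$,
\[
|\partial^\kappa f(x)|\le \frac{\kappa!}{r^{|\kappa|}}\sup_{|z_j-x_j|=r}|f(z)|
\le C_\epsilon\,\frac{\kappa!}{r^{|\kappa|}}\,e^{\epsilon\,(|x|+r)^\rho}.
\]
Choosing $r$ proportional to $|\kappa|^{1/\rho}$ (the standard optimization) and using \eqref{3.3} converts the factor $\kappa!/r^{|\kappa|}$ into something of the form $C\,L^{-|\kappa|}(\kappa!)^{1-1/\rho}$ while the exponential $e^{\epsilon(|x|+r)^\rho}$ is absorbed into $C_L\,e^{|x/L|^\rho}$ after splitting the $|\kappa|$-dependent and $|x|$-dependent parts (using $(a+b)^\rho\le 2^\rho(a^\rho+b^\rho)$ and, for $\rho\le 1$, subadditivity directly). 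This gives \eqref{3.1}, so the restriction lands in $\mathcal E^\rho(\oR^d)$, and tracking constants shows the map is continuous.

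Next I would show surjectivity and that the inverse is continuous, i.e. that every $g\in\mathcal E^\rho(\oR^d)$ extends to an entire function on $\oC^d$ satisfying \eqref{3.4}. The extension is given by the Taylor series $f(z)=\sum_\kappa \frac{\partial^\kappa g(0)}{\kappa!}z^\kappa$; the bound \eqref{3.1} with $x=0$ gives $|\partial^\kappa g(0)|\le C_L L^{-|\kappa|}(\kappa!)^{1-1/\rho}$, so the series converges absolutely for all $z$ and defines an entire function. To estimate $|f(z)|$ for arbitrary $z$, I would instead Taylor-expand around the real point $x=\Re z$ and use \eqref{3.1} there:
\[
|f(z)|\le \sum_\kappa \frac{|\partial^\kappa g(x)|}{\kappa!}\,|y|^{|\kappa|}
\le C_L\,e^{|x/L|^\rho}\sum_\kappa \frac{(\kappa!)^{-1/\rho}}{ }\,(|y|/L)^{|\kappa|},
\]
and the remaining multi-index sum $\sum_\kappa (\kappa!)^{-1/\rho}(|y|/L)^{|\kappa|}$ is a (finite product of one-variable) Mittag-Leffler–type series, which is $\le C'\exp(C''(|y|/L)^\rho)$ for $\rho\le 2$ by the classical asymptotics of $\sum_m t^m/(m!)^{1/\rho}$. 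Combining with $|x|,|y|\le|z|$ and the same splitting trick yields \eqref{3.4} with $\epsilon$ that can be made small by taking $L$ large; this also shows $g\mapsto f$ is continuous. Injectivity is immediate from the identity theorem since a real-analytic function vanishing on $\oR^d$ with an entire extension is zero.

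The main obstacle is purely quantitative bookkeeping: choosing the radius $r$ in the Cauchy estimate and the index $L$ in the target norm so that the $|\kappa|$-growth $\kappa!/r^{|\kappa|}$ matches $(\kappa!)^{1-1/\rho}$ up to the geometric factor allowed by \eqref{3.3}, while simultaneously keeping the exponential-in-$x$ (resp. in-$|z|$) part under control and with the right (arbitrarily small, resp. arbitrarily large) free parameter. This is where the restriction $\rho\le 2$ would enter if it were needed, but in fact for the isomorphism statement any $0<\rho<\infty$ works — the order-$\le 2$ restriction matters only later, for the star-product algebra structure, not for Proposition~\ref{P1}. I would carry out the two estimates symmetrically, treating the one-variable sums $\sum_m t^m/(m!)^{1/\rho}$ once and for all, and then note that the norm comparisons established along the way give continuity of the bijection in both directions, hence a topological isomorphism of Fr\'echet spaces.
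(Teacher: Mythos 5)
Your proposal is correct and follows essentially the same route as the paper: the extension direction is handled by Taylor expansion about the real point $x=\Re z$ with the bound \eqref{3.1} and a Mittag--Leffler--type estimate of $\sum_\kappa(\kappa!)^{-1/\rho}(|y|/L)^{|\kappa|}$ (the paper does this via a supremum times a geometric series, which is the same computation), while the restriction direction uses the Cauchy inequality on polydiscs with the radius optimized as in \eqref{3.8}. Your closing observation that no restriction $\rho\le2$ is needed here is also correct (and supersedes the spurious ``for $\rho\le2$'' in your Mittag--Leffler step, since that asymptotic holds for every $\rho>0$).
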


\begin{proof}
The Taylor series expansion of $f\in \mathcal E^\rho(\oR^d)$ about
a point $x\in\oR^d$ has an infinite radius of convergence and
defines an analytic continuation of $f$ to $\oC^d$ which
satisfies~\eqref{3.4}. Indeed, using the first of
inequalities~\eqref{3.3} and choosing $L'<L$, we obtain
\begin{multline}
|f(x+iy)|\le\sum_\kappa\frac {|y^\kappa|}{\kappa!}|\partial^\kappa
f(x)|\leq \|f\|_L\, e^{|x/L|^\rho} \sum_\kappa\frac
{|y^\kappa|}{L^{|\kappa|}(\kappa!)^{1/\rho}}\le
\\\leq \|f\|_L\, e^{|x/L|^\rho} \sup_\kappa \frac {|y^\kappa|
e^{|\kappa|/\rho}}{{L'}^{|\kappa|} (\kappa^\kappa)^{1/\rho}}
\sum_\kappa\left(\frac{L'}{L}\right)^{|\kappa|}\leq
C_{L,L'}\|f\|_L\, e^{|x/L|^\rho+(d/\rho)|y/L'|^\rho}\leq\\\leq
C_{L,L'}\|f\|_L\, e^{(1+d/\rho)|z/L'|^\rho}. \label{3.6}
\end{multline}
Therefore, $f(x+iy)\in \mE^\rho(\oC^d)$ and $\| f\|_\epsilon\leq
C_{\epsilon,L}\|f\|_L$ if $(1+d/\rho)L^{-\rho}<\epsilon$, and thus
the corresponding map from  $\mathcal E^\rho(\oR^d)$ to
$\mE^\rho(\oC^d)$ is continuous.

Conversely, let $f\in \mE^\rho(\oC^d)$ and let $D_r=\{\zeta\in
\oC^d\colon |\zeta_j|\leq r_j, j=1,\dots,d\}$, where all $r_j$'s
are positive. It follows from the Cauchy inequality that
\begin{equation}
|\partial^\kappa f(z)|\leq \frac{\kappa!}{r^\kappa}\sup_{\zeta\in
D_r}|f(z-\zeta)|\leq \frac{\kappa!}{r^\kappa}\|f\|_\epsilon
e^{\epsilon|2z|^\rho+\epsilon|2r|^\rho},
 \label{3.7}
 \end{equation}
where $|r|^\rho=\max_jr_j^\rho\le \sum_j r_j^\rho$. A simple
computation gives
\begin{equation}
\inf_r\frac{e^{\sum_jr_j^\rho}}{r^k}=\frac{(\rho
e)^{|\kappa|/\rho}}{(\kappa^\kappa)^{1/\rho}}. \label{3.8}
\end{equation}
Combining~\eqref{3.7} and~\eqref{3.8}, we obtain
\begin{equation}
|\partial^\kappa f(z)|\leq  \|f\|_\epsilon\,
2^{|\kappa|}(\epsilon\rho e)^{|\kappa|/\rho}\frac{\kappa!}
{(\kappa^\kappa)^{1/\rho}}e^{\epsilon|2z|^\rho}. \label{3.9}
\end{equation}
Using the second of inequalities~\eqref{3.3}, this can also be
written as
\begin{equation}
|\partial^\kappa f(z)|\le C_{\epsilon'} \|f\|_\epsilon\,
2^{|\kappa|} (\epsilon'\rho)^{|\kappa|/\rho}(k!)^{1-1/\rho}
e^{\epsilon|2z|^\rho}\qquad \text{for any}\,\, \epsilon'>\epsilon.
 \label{3.10}
\end{equation}
We infer that the restriction $f|_{\oR^d}$ of the function to the
real space  belongs to $\mathcal E^\rho(\oR^d)$ and satisfies
\begin{equation}
\|f|_{\oR^d}\|_L\leq C'_{\epsilon,L}\|f\|_\epsilon\qquad \text{for
$L<\frac{1}{2\epsilon^{1/\rho}}\min\left(1,\frac{1}{\rho^{1/\rho}}\right)$\,},
 \notag
\end{equation}
which completes the proof.
\end{proof}

\begin{remark}\label{R1}  We prefer to say that $\mE^\rho(\oC^d)$ is
isomorphic to $\mathcal E^\rho(\oR^d)$ instead of saying that
these spaces coincide, because the elements of $\mathcal
E^\rho(\oR^d)$ can also be continued as anti-analytic functions.
\end{remark}

We now characterize the infinite order differential operators that
are endomorphisms of  $\mathcal E^\rho(\oR^d)$.

\begin{theorem}\labelT{1} Let $\rho>1$ and $\rho'=\rho/(\rho-1)$. If
$G(s)=\sum_\kappa c_\kappa s^\kappa$ is an entire function of
order $\le\rho'$ and  finite type, then the differential operator
$G(\partial)=\sum_\kappa c_\kappa \partial^\kappa$ maps $\mathcal
E^\rho(\oR^d)$ continuously into itself and the series
$\sum_\kappa c_\kappa \partial^\kappa f$ converges absolutely in
every norm of this space for any $f\in\mathcal E^\rho(\oR^d)$. If
$0<\rho\le1$, then analogous statements  hold for each  operator
of the form $G(\partial)$, where $G(s)$ is an entire function.
\end{theorem}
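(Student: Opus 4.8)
The plan is to estimate the norm $\|G(\partial)f\|_M$ for a fixed target index $M$ by inserting the derivative bounds for $f$ available from \eqref{3.1} (equivalently \eqref{3.2}), summing over the multi-index $\kappa$ labelling the terms of $G(\partial)$, and controlling the resulting series by the growth hypothesis on $G$. The key mechanism is a trade-off between the factor $(\kappa!)^{1-1/\rho}$ coming from the Denjoy--Carleman-type bound on $\partial^\kappa f$ and the decay of the Taylor coefficients $c_\kappa$ of $G$, which — since $G$ has order $\le\rho'$ and finite type — satisfy an estimate of the form $|c_\kappa|\le C\,A^{|\kappa|}/(\kappa!)^{1/\rho'} = C\,A^{|\kappa|}/(\kappa!)^{1-1/\rho}$ for some $A>0$ (the standard relation between the growth order/type of an entire function and the decay of its power-series coefficients, cf. the computation \eqref{3.8} already used in the paper). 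Thus $|c_\kappa|\,(\kappa!)^{1-1/\rho}$ is bounded by $C A^{|\kappa|}$, and the exponents of $\kappa!$ cancel exactly; this exact cancellation at $\rho'=\rho/(\rho-1)$ is the heart of the statement.

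First I would make precise the coefficient estimate for $G$: from $|G(s)|\le C_\tau e^{\tau|s|^{\rho'}}$ for every $\tau$ larger than the type (or just for some $\tau$, in the finite-type case), a Cauchy-estimate argument identical to \eqref{3.7}--\eqref{3.8} gives $|c_\kappa|\le C_\tau\,(e\rho'\tau)^{|\kappa|/\rho'}/(\kappa^\kappa)^{1/\rho'}$, and then \eqref{3.3} converts $(\kappa^\kappa)^{1/\rho'}$ into $(\kappa!)^{1/\rho'}$ up to a further geometric factor $(1+\delta)^{|\kappa|}$. Next, for $f\in\mathcal E^\rho(\oR^d)$ and a given $L>0$ I would write, for any multi-index $\mu$,
\begin{equation}
\notag
L^{|\mu|}|\partial^\mu (G(\partial)f)(x)|\,e^{-|x/L|^\rho}
\le \sum_\kappa |c_\kappa|\,L^{|\mu|}|\partial^{\mu+\kappa}f(x)|\,e^{-|x/L|^\rho}
\le \|f\|_{L'}\sum_\kappa |c_\kappa|\,\frac{L^{|\mu|}}{{L'}^{|\mu+\kappa|}}\,((\mu+\kappa)!)^{1-1/\rho},
\end{equation}
for any $L'$, where I have used \eqref{3.2} for the seminorm $\|f\|_{L'}$. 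Using $(\mu+\kappa)!\le 2^{|\mu|+|\kappa|}\mu!\,\kappa!$, the factor $((\mu+\kappa)!)^{1-1/\rho}$ splits as $\le (2^{|\mu|}\mu!)^{1-1/\rho}(2^{|\kappa|}\kappa!)^{1-1/\rho}$; the $\mu$-part combines with $L^{|\mu|}/{L'}^{|\mu|}$ to reproduce (a constant multiple of) the seminorm $\|\,\cdot\,\|_{M}$ on the left-hand side once $M$ is chosen appropriately, and the $\kappa$-part pairs with $|c_\kappa|\,(\kappa!)^{1-1/\rho}\le CA^{|\kappa|}$ to leave a convergent geometric series $\sum_\kappa (CA\cdot 2^{1-1/\rho}/L')^{|\kappa|}$ provided $L'$ is taken large enough. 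This simultaneously proves the continuity bound $\|G(\partial)f\|_M\le C_{M,L'}\|f\|_{L'}$ and the absolute convergence of $\sum_\kappa c_\kappa\partial^\kappa f$ in each seminorm (the partial sums form a Cauchy net by the tail of the same geometric series).

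The case $0<\rho\le 1$ is easier and I would treat it by the same computation but reading the inequalities in the opposite direction: now $1-1/\rho\le 0$, so $(\kappa!)^{1-1/\rho}\le 1$ and the derivative bound \eqref{3.1} on $f$ already carries a \emph{favourable} factor $L^{-|\kappa|}$ with $L$ arbitrarily large; hence for \emph{any} entire $G$ (coefficients merely satisfying $|c_\kappa|\le C_R R^{|\kappa|}$ for every $R$, i.e. $\limsup |c_\kappa|^{1/|\kappa|}=0$ up to the multi-index normalisation) one again gets a convergent geometric series after choosing $L$ large relative to the growth of the $c_\kappa$. \textbf{The main obstacle} is purely bookkeeping: keeping the interplay of the three free scales ($L$ from the target seminorm, $L'$ from the source seminorm, and the type parameter $\tau$ of $G$) consistent so that all geometric series converge at once, and handling the multi-index binomial split $(\mu+\kappa)!\le 2^{|\mu|+|\kappa|}\mu!\kappa!$ cleanly; there is no conceptual difficulty once the coefficient/growth dictionary for entire functions of order $\rho'$ is in hand.
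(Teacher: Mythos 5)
Your proof is correct and follows essentially the same route as the paper: the Cauchy-type estimate $|c_\kappa|\le C A^{|\kappa|}/(\kappa!)^{1/\rho'}$ for the coefficients of $G$, the exact cancellation of the factorial powers against the $(\kappa!)^{1-1/\rho}$ growth of $\partial^\kappa f$, and a resulting geometric series in $|\kappa|$. The only (cosmetic) difference is that you work in the real-variable norms \eqref{3.2} and therefore need the binomial split $(\mu+\kappa)!\le 2^{|\mu|+|\kappa|}\mu!\,\kappa!$, whereas the paper passes via Proposition~\ref{P1} to the complex-variable norms \eqref{3.5}, where the target seminorm involves no derivatives and the bound \eqref{3.10} does the corresponding work.
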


\begin{proof}
In deriving~\eqref{3.9}, we have already reproduced the well-known
estimate of the Taylor coefficients of entire functions with a
given order of growth. When applied to a function $G(s)$ of order
$\le \rho'$ and type $<a$, a similar estimate shows that
\begin{equation}
|c_\kappa|=\frac{|\partial^\kappa G(0)|}{\kappa!}\leq
C\frac{(a\rho')^{|\kappa|/\rho'}}{(\kappa!)^{1/\rho'}}\,,
 \label{3.11}
\end{equation}
where $C$ is a positive constant. If $1/\rho'=1-1/\rho$ and
$\epsilon''\ge 2^\rho\epsilon$,  then it follows from~\eqref{3.10}
and~\eqref{3.11} that
\begin{equation}
\sum_\kappa\|\,c_\kappa\partial^\kappa
f\|_{\epsilon''}\equiv\sum_\kappa |c_\kappa|\sup_z|\partial^\kappa
f(z)|e^{-\epsilon''|z|^\rho}\le C_{\epsilon'}\|f\|_\epsilon
\sum_\kappa 2^{|\kappa|}(\epsilon'
\rho)^{|\kappa|/\rho}(a\rho')^{|\kappa|/\rho'}, \notag
\end{equation}
where $\epsilon'>\epsilon$ and can be taken arbitrarily close to
$\epsilon$. Because $\epsilon$ can be chosen arbitrarily small, we
conclude that the series $\sum_\kappa c_\kappa
\partial^\kappa f$ converges absolutely in $\mE^\rho( \oC^d)\approx\mathcal
E^\rho(\oR^d)$ and the operator $G(\partial)$ is continuous in the
topology of this space. If $\rho\le1$, then it suffices to take
into account that the Taylor coefficients of any entire function
satisfy the inequality $|c_\kappa|\le
C_\varepsilon\varepsilon^{|\kappa|}$ with arbitrarily small
$\varepsilon>0$. Combining this inequality with~\eqref{3.10}, we
see that in this case, any operator of the form $G(\partial)$,
with $G(s)$ an entire function, maps $\mathcal E^\rho(\oR^d)$
continuously into itself. The theorem is proved.
\end{proof}

\begin{proposition}\labelP{2} The set of polynomials is dense in
$\mE^\rho(\oC^d)$.
\end{proposition}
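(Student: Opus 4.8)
The plan is to approximate an arbitrary $f\in\mE^\rho(\oC^d)$ by the partial sums of its Taylor expansion about the origin. Write $f(z)=\sum_\kappa a_\kappa z^\kappa$ with $a_\kappa=\partial^\kappa f(0)/\kappa!$ and put $P_N(z)=\sum_{|\kappa|\le N}a_\kappa z^\kappa$; each $P_N$ is a polynomial, and since $\mE^\rho(\oC^d)$ is metrizable with topology generated by the increasing family of norms $\|\cdot\|_\epsilon$, it suffices to fix an index $\epsilon_0\in\{1,\tfrac12,\tfrac13,\dots\}$ and show that $\|f-P_N\|_{\epsilon_0}\to0$ as $N\to\infty$.

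First I would record a bound on the Taylor coefficients. Because $f$ is of minimal type, the estimate leading to~\eqref{3.9} is available with arbitrarily small $\epsilon>0$ (and $\|f\|_\epsilon<\infty$), and taking $z=0$ in it gives $|a_\kappa|\le\|f\|_\epsilon\,2^{|\kappa|}(\epsilon\rho e)^{|\kappa|/\rho}(\kappa^\kappa)^{-1/\rho}$. Using the convexity of $x\mapsto x\ln x$ (Jensen) to get $\kappa^\kappa\ge(|\kappa|/d)^{|\kappa|}$, and grouping together the (at most $(m+d)^{d-1}$) multi-indices with $|\kappa|=m$, together with $|z^\kappa|\le|z|^m$, I would obtain
\[
|f(z)-P_N(z)|\le\sum_{|\kappa|>N}|a_\kappa|\,|z|^{|\kappa|}\le\|f\|_\epsilon\sum_{m>N}(m+d)^{d-1}\Bigl(\frac{B\,|z|^\rho}{m}\Bigr)^{m/\rho},\qquad B\eqdef 2^\rho\epsilon\rho e\,d.
\]

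The key point, and the step requiring some care, is to make this tail small \emph{uniformly in $z$}: bounding each summand by its maximum over $m$ and multiplying by the number of terms would produce a factor growing polynomially in $N$ that destroys the convergence. Instead I would extract a geometric weight by means of the elementary identity $\sup_{u>0}(a/u)^{u/\rho}=e^{a/(e\rho)}$ (the maximum being attained at $u=a/e$), writing $(B|z|^\rho/m)^{m/\rho}\le 2^{-m/\rho}\,e^{2B|z|^\rho/(e\rho)}$. Choosing $\epsilon$ so small that $2B/(e\rho)=2^{\rho+1}\epsilon d\le\epsilon_0$, this yields $|f(z)-P_N(z)|\le\|f\|_\epsilon\,e^{\epsilon_0|z|^\rho}\,\eta_N$ with $\eta_N\eqdef\sum_{m>N}(m+d)^{d-1}2^{-m/\rho}\to0$, whence $\|f-P_N\|_{\epsilon_0}\le\|f\|_\epsilon\,\eta_N\to0$.

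Apart from this uniformity issue, the argument uses only the Cauchy-type coefficient estimate already obtained in deriving~\eqref{3.9}, the convexity bound $\kappa^\kappa\ge(|\kappa|/d)^{|\kappa|}$, and an elementary count of multi-indices of a given total degree — all routine. The one genuine obstacle is thus the uniformity in $z$, which is removed by pulling out the factor $2^{-m/\rho}$ before maximizing over $m$, at the negligible cost of doubling $B$; this is harmless because $B$ is proportional to $\epsilon$ and $\epsilon$ can be taken arbitrarily small.
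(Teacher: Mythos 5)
Your proof is correct and follows essentially the same route as the paper: approximate $f$ by its Taylor partial sums about the origin, bound the coefficients via the Cauchy-type estimate~\eqref{3.9}, and control the tail in a fixed norm $\|\cdot\|_{\epsilon_0}$ by exploiting that $\epsilon$ may be taken arbitrarily small relative to $\epsilon_0$. The only difference is bookkeeping: the paper evaluates $\sup_z|z^\kappa|e^{-(\epsilon'/d)\sum_j|z_j|^\rho}$ exactly via~\eqref{3.8}, so the $\kappa^\kappa$ factors cancel and the term-by-term norms form a convergent geometric-type series, which sidesteps the uniformity-in-$z$ issue you resolve with the $2^{-m/\rho}$ trick.
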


\begin{proof}
 We show that the Taylor series expansion of $f\in
\mE^\rho(\oC^d)$ converges absolutely in $\mE^\rho(\oC^d)$. Let
$\epsilon'>2^\rho d\epsilon$. Using the inequality $d|z|^\rho\ge
\sum_j|z_j|^\rho$ together with~\eqref{3.8} and \eqref{3.9}, we
get
\begin{multline}
\sum_\kappa\left\|\frac {z^\kappa}{\kappa!}\partial^\kappa
f(0)\right\|_{\epsilon'}\le\sum_\kappa \frac {|\partial^\kappa
f(0)|}{\kappa!}\sup_z
|z^\kappa|e^{-(\epsilon'/d)\sum_j|z_j|^\rho}\le\\\le
\|f\|_\epsilon \sum_\kappa 2^{|\kappa|}(\epsilon
d/\epsilon')^{|\kappa|/\rho}
 \le C_{\epsilon,\epsilon'}\|f\|_\epsilon.
\notag
\end{multline}
Because the topology of $\mE^\rho(\oC^d)$ is finer than the
topology of simple convergence, we conclude that the Taylor series
converges absolutely in the topology of $\mE^\rho(\oC^d)$ to the
function $f$. The proposition is proved.
\end{proof}

\begin{theorem}\labelT{2} If $\rho\le 2$, then for any matrix $\vartheta$, the space
$\mathcal E^\rho(\oR^d)$ is an associative unital topological
algebra under the star product $f\star_\vartheta g$. The series
defining this product converges absolutely in $\mathcal
E^\rho(\oR^d)$.
\end{theorem}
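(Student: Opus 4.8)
The plan is to estimate the individual terms of the defining series~\eqref{2.2} by means of the derivative bound~\eqref{3.1}. Write $T_0=fg$ and, for $m\ge1$,
\[
T_m(x)=\frac{1}{m!}\,\Bigl[(\vartheta^{jk}\partial_{x^j}\partial_{x'^k})^m\,f(x)\,g(x')\Bigr]\Big|_{\,x'=x},
\]
so that $f\star_\vartheta g=\sum_{m\ge0}T_m$ is exactly the series in~\eqref{2.2}. Since $\vartheta^{jk}\partial_{x^j}\partial_{x'^k}$ and $\partial_{x^i}+\partial_{x'^i}$ are commuting constant-coefficient operators and $\partial_{x^i}[F(x,x')]|_{x'=x}=[(\partial_{x^i}+\partial_{x'^i})F(x,x')]|_{x'=x}$, one gets
\[
\partial^\kappa T_m(x)=\sum_{\mu+\nu=\kappa}\binom{\kappa}{\mu}\,\frac{1}{m!}\,\Bigl[(\vartheta^{jk}\partial_{x^j}\partial_{x'^k})^m(\partial^\mu f)(x)\,(\partial^\nu g)(x')\Bigr]\Big|_{\,x'=x}.
\]
Expanding the $m$-th power into a sum over index sequences $((j_l,k_l))_{l=1}^m$ and using $\sum_{\text{seq}}\prod_l|\vartheta^{j_lk_l}|=V^m$ with $V=\sum_{j,k}|\vartheta^{jk}|$ gives the pointwise estimate
\[
|\partial^\kappa T_m(x)|\le\sum_{\mu+\nu=\kappa}\binom{\kappa}{\mu}\,\frac{V^m}{m!}\,\max_{|J|=m}|\partial^{J+\mu}f(x)|\;\max_{|K|=m}|\partial^{K+\nu}g(x)|.
\]

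Into this I substitute~\eqref{3.1}, i.e.\ $|\partial^\alpha f(x)|\le\|f\|_L\,L^{-|\alpha|}(\alpha!)^{1-1/\rho}e^{|x/L|^\rho}$, and likewise for $g$. The only nonroutine point is the handling of the factorials. For $1\le\rho\le2$ I use $(J+\mu)!\le 2^{m+|\mu|}m!\,\mu!$ (and the analogous bound for $K+\nu$), so that the two $m$-th order derivatives contribute a factor $(m!)^{2-2/\rho}$ which, together with the $1/m!$ in $T_m$, leaves $(m!)^{1-2/\rho}$; also $\sum_{\mu+\nu=\kappa}\binom{\kappa}{\mu}(\mu!\,\nu!)^{1-1/\rho}=(\kappa!)^{1-1/\rho}\sum_{\mu\le\kappa}\binom{\kappa}{\mu}^{1/\rho}\le(\kappa!)^{1-1/\rho}b^{|\kappa|}$ for a suitable $b=b(\rho)$. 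Collecting the powers of $L$ one obtains
\[
|\partial^\kappa T_m(x)|\le C\,\|f\|_L\|g\|_L\,(\kappa!)^{1-1/\rho}\Bigl(\frac{b'}{L}\Bigr)^{|\kappa|}\Bigl(\frac{b''V}{L^{2}}\Bigr)^{m}(m!)^{1-2/\rho}\,e^{2|x/L|^\rho},
\]
with numerical constants $b',b''$. Summing over $m$: if $\rho<2$ the factor $(m!)^{1-2/\rho}$ forces convergence for every $L>0$, while if $\rho=2$ this is a geometric series, convergent as soon as $L>\sqrt{b''V}$; in either case the sum is finite for all sufficiently large $L$, which is all we need since $L$ ranges over $(0,\infty)$. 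As $2|x/L|^\rho=\bigl|x/(2^{-1/\rho}L)\bigr|^\rho$, choosing $M=c(\rho)L$ with $c(\rho)\in(0,1)$ small enough makes both $(Mb'/L)^{|\kappa|}\le1$ and $e^{2|x/L|^\rho-|x/M|^\rho}\le1$, so that $\sum_{m\ge0}\|T_m\|_M\le C(L)\|f\|_L\|g\|_L<\infty$ for all sufficiently large $M$. For $0<\rho\le1$ the same scheme applies, and is simpler: one uses instead $(J+\mu)!\ge J!\,\mu!\ge\mu!$, hence $((J+\mu)!)^{1-1/\rho}\le(\mu!)^{1-1/\rho}$, no positive power of $m!$ beyond $1/m!$ appears, and $\sum_m(V/L^2)^m/m!=e^{V/L^2}$.

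Since $\mathcal E^\rho(\oR^d)$ is complete and its norms are monotone, $\|\cdot\|_{M'}\le\|\cdot\|_{M}$ for $M'<M$, the last bound holds for all $M>0$ and the partial sums of $\sum_m T_m$ form a Cauchy sequence in $\mathcal E^\rho(\oR^d)$, whose limit coincides pointwise with $f\star_\vartheta g$. This shows simultaneously that $f\star_\vartheta g\in\mathcal E^\rho(\oR^d)$, that the defining series converges to it absolutely in every norm of the space, and that $(f,g)\mapsto f\star_\vartheta g$ is a continuous bilinear map; iterating the estimate shows that the trilinear maps $(f,g,h)\mapsto(f\star_\vartheta g)\star_\vartheta h$ and $(f,g,h)\mapsto f\star_\vartheta(g\star_\vartheta h)$ are continuous as well. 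The constant function $1$ belongs to $\mathcal E^\rho(\oR^d)$ (only its $\kappa=0$ term is nonzero) and is a two-sided unit, because every $T_m$ with $m\ge1$ involves a derivative of each factor. Finally, on polynomials $\star_\vartheta$ reduces to the finite-sum formal star product attached to the constant matrix $\vartheta$, which is associative; since polynomials are dense in $\mathcal E^\rho(\oR^d)$ by \refP{2} and the two trilinear maps above are continuous, the identity $(f\star_\vartheta g)\star_\vartheta h=f\star_\vartheta(g\star_\vartheta h)$ extends from polynomials to all of $\mathcal E^\rho(\oR^d)$.

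The step I expect to be the main obstacle is the borderline exponent $\rho=2$: the series in $m$ is then only geometric and converges just for $L$ — hence for the target parameter $M$ — large relative to $V=\sum_{j,k}|\vartheta^{jk}|$, so the factorial bookkeeping must be arranged so that the $1/m!$ from~\eqref{2.2} precisely cancels the $(m!)^2$ produced by differentiating both factors $m$ times, after which one exploits that the defining parameter of $\mathcal E^\rho(\oR^d)$ may be taken arbitrarily large.
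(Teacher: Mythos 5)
Your proof is correct, but it follows a genuinely different route from the one in the paper. For convergence and continuity, the paper does not estimate the terms $T_m$ directly: it writes $f\star_\vartheta g$ as the composition of the (jointly continuous) tensor map $(f,g)\mapsto f\otimes g$ into $\mathcal E^\rho(\oR^{2d})$, the infinite-order operator $G(\partial)$ with $G(s,s')=\exp(i\vartheta^{jk}s_js'_k)$ --- an entire function of order $2\le\rho'$ and finite type, so that \refT{1} applies --- and the continuous restriction to the diagonal. This is shorter and reuses machinery needed elsewhere, whereas your term-by-term estimate with the explicit factorial bookkeeping $(J+\mu)!\le 2^{m+|\mu|}m!\,\mu!$ is self-contained and makes completely visible why $\rho=2$ is the borderline (the $1/m!$ exactly cancels $(m!)^{2-2/\rho}$, leaving a geometric series that converges only because the norm parameter $L$ may be taken arbitrarily large --- i.e.\ because $\mathcal E^\rho$ consists of functions of \emph{minimal} type; this is precisely the point behind the remark in \refS{9} that the finite-type space $E^\rho$ is an algebra only for $\rho<2$). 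For associativity, the paper applies the diagonal identity \eqref{3.12} and \refT{1} on $\mathcal E^\rho(\oR^{3d})$ to show that both iterated products are given by one and the same absolutely convergent operator; you instead verify associativity on polynomials and extend by \refP{2} and the continuity of the two trilinear maps. Both are valid; your density argument avoids the triple-space computation but leans on the unproved (though standard) assertion that the finite formal star product of polynomials attached to a constant matrix $\vartheta$ is associative --- this deserves a line, e.g.\ checking it on exponentials $e^{a\cdot x}$, where $e_a\star_\vartheta e_b=e^{ia\cdot\vartheta b}e_{a+b}$ makes the associativity an identity between symmetric expressions in $a,b,c$, or invoking \eqref{3.12} where all sums are finite.
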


\begin{proof}
 According to the definition~\eqref{2.2}, the product
$f\star_\vartheta g$ is obtained by applying the operator
$\exp(i\vartheta^{jk}\partial_j\otimes\partial_k)$ to the function
$(f\otimes g)(x,x')\in \mathcal E^\rho(\oR^{2d})$ and then
restricting to the diagonal  $x=x'$. The entire function
$\exp(i\vartheta^{jk}s_j s'_k)$ is of order 2 and finite type.
From the isomorphism $\mathcal E^\rho(\oR^d)\approx
\mE^\rho(\oC^d)$, we immediately infer that the restriction map
$\mathcal E^\rho(\oR^{2d})\to\mathcal E^\rho(\oR^d)$ is
continuous. Therefore Theorem~\ref{T1} implies that the series
defining $f\star_\vartheta g$ is absolutely convergent. The
bilinear map $(f,g)\to f\otimes g$ from $\mathcal
E^\rho(\oR^d)\times \mathcal E^\rho(\oR^d)$ to $\mathcal
E^\rho(\oR^{2d})$ is obviously jointly continuous. Therefore the
bilinear map $(f,g)\to f\star_\vartheta g$ is also jointly
continuous. To prove the associativity of this product we use the
identity
\begin{equation}
\partial_x^\kappa f(x,x)=\left.(\partial_x+\partial_{x'})^\kappa
f(x,x')\right|_{x=x'},
 \label{3.12}
\end{equation}
which holds for any $f\in \mathcal E^\rho(\oR^{2d})$. With the
help of~\eqref{3.12} we obtain
\begin{equation}
(f\star_\vartheta(g\star_\vartheta
h))(x)=\left.e^{i\vartheta^{jk}\partial_{x^j}(\partial_{x'^k}+
\partial_{x''^k})}
f(x)\left(e^{i\vartheta^{jk}\partial_{x'^j}\partial_{x''^k}}
g(x')h(x'')\right)\right|_{x=x'=x''} \label{3.13}
\end{equation}
\begin{equation}
((f\star_\vartheta g)\star_\vartheta h)(x)=\left.
e^{i\vartheta^{jk}(\partial_{x^j}+\partial_{x'^j})\partial_{x''^k}}
\left(e^{i\vartheta^{jk}\partial_{x^j}\partial_{x'^k}}f(x)g(x')\right)h(x'')\right|_{x=x'=x''}
\label{3.14}
\end{equation}
The right-hand sides of~\eqref{3.13} and~\eqref{3.14} coincide by
Theorem~\ref{T1} applied to the operator
$\exp({i\vartheta^{jk}\partial_{x^j}(\partial_{x'^k}+
\partial_{x''^k})+i\vartheta^{jk}\partial_{x'^j}\partial_{x''^k}})$
acting to $\mathcal E^\rho(\oR^{3d})$. Clearly, the function
$f(x)\equiv 1$ is the unit element of the algebra $(\mathcal
E^\rho(\oR^d),\star_\vartheta)$. The theorem is proved.
\end{proof}

\section{Subalgebras associated with cone-shaped regions}\label{S4}

If $\rho>1$, then the space $\mathcal E^\rho(\oR^d)$ has a
nontrivial subspace  $\mW^\rho(\oR^d)$ consisting of functions
rapidly decreasing at  real infinity.  More precisely,
$\mW^\rho(\oR^d)$ consists of smooth functions on $\oR^d$
satisfying the condition
\begin{equation}
|\partial^\kappa f(x)|\leq C_{N,L}
L^{-|\kappa|}(\kappa!)^{1-1/\rho}(1+|x|)^{-N}\quad\text{for
all}\,\, L>0,\,\,N=0,1,2,\dots.
 \label{4.1}
 \end{equation}
Correspondingly, its topology of  is defined by the system of
norms
\begin{equation}
\|f\|_{N,L}=\sup_{x,\kappa}L^{|\kappa|}(\kappa!)^{-(1-1/\rho)}(1+|x|)^N
|\partial^\kappa f(x)|,
 \label{4.2}
\end{equation}
and under this topology $\mW^\rho(\oR^d)$ is a Fr\'echet space. By
reasoning similar to the proof of Preposition~1, this space is
canonically isomorphic to the space of entire functions satisfying
the inequalities
\begin{equation}
(1+|x|)^N|f(x+iy)|\le C_{N,\epsilon}e^{\epsilon|y|^\rho},\qquad
\epsilon>0,\quad N=0,1,2,\dots,
 \label{4.3}
\end{equation}
and the system of norms~\eqref{4.2} is equivalent to the system
\begin{equation}
\|f\|_{N,\epsilon}=\sup_{x,y}(1+|x|)^N|f(x+iy)|e^{-\epsilon|y|^\rho}.
 \label{4.4}
\end{equation}
A similar space can be associated with each open cone
$V\subset\oR^d$. We define $\mW^\rho(V)$ to be the space of all
entire functions with the finite norms
\begin{equation}
\|f\|_{V,N,\epsilon}=\sup_{x,y}(1+|x|)^N|
f(x+iy)|e^{-\epsilon\delta_V^\rho(x)-\epsilon|y|^\rho},
 \label{4.5}
\end{equation}
where $\delta_V(x)=\inf_{\xi\in V}|x-\xi|$ is the distance of  $x$
from $V$. We note that $\delta_V(tx)=\delta_V(x)$ for all $t>0$,
because cones are invariant under dilations. Clearly, if
$V_1\subset V_2$, then $\mW^\rho(V_2)$ is canonically embedded in
$\mW^\rho(V_1)$. The space $\mW^\rho(\oR^d)$ is embedded in every
$\mW^\rho(V)$ and all of these spaces are embedded in $\mathcal
E^\rho(\oR^d)$. If $K$ is a closed cone and a functional $v\in
(\mW^\rho(\oR^d))'$ has a continuous extension to each
$\mW^\rho(V)$, where $V\supset K\setminus\{0\}$, then the cone $K$
can be thought of as a carrier of $v$, because this property
implies that  $v$ decreases  in all directions outside $K$ more
rapidly than any test function.

\begin{theorem}\label{T3} If $1\le\rho\le 2$, then for any open cone
$V\subset\oR^d$, the space $\mW^\rho(V)$ is an associative
topological algebra under the star product $f\star_\vartheta g$
and the series defining this product converges absolutely in
$\mW^\rho(V)$.
\end{theorem}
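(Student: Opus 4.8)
The plan is to deduce the result from \refT{2} together with a pair of weighted estimates. Since $\mW^\rho(V)$ is continuously embedded in $\mathcal E^\rho(\oR^d)$, the product $f\star_\vartheta g$ is already defined for $f,g\in\mW^\rho(V)$, its defining series converges absolutely in $\mathcal E^\rho(\oR^d)$, and the product is associative there, by \refT{2}. Hence it will be enough to show that $\mW^\rho(V)$ is closed under $\star_\vartheta$, that the bilinear map $(f,g)\mapsto f\star_\vartheta g$ is jointly continuous in the finer topology of $\mW^\rho(V)$, and that the series converges absolutely in that topology; associativity on $\mW^\rho(V)$ is then inherited from its associativity on $\mathcal E^\rho(\oR^d)$. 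Expanding $\exp(i\vartheta^{jk}s_js'_k)=\sum_{\alpha,\beta}a_{\alpha\beta}s^\alpha (s')^\beta$ and using~\eqref{2.2}, one has $f\star_\vartheta g=\sum_{\alpha,\beta}a_{\alpha\beta}\,\partial^\alpha f\,\partial^\beta g$, where, because this symbol is entire of order $2$ and finite type $<a$, the bound~\eqref{3.11} gives $|a_{\alpha\beta}|\le C\,(2a)^{(|\alpha|+|\beta|)/2}(\alpha!\,\beta!)^{-1/2}$.

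The key technical input is the $\mW^\rho(V)$-counterpart of the Cauchy-type estimate~\eqref{3.9}--\eqref{3.10}. Applying the Cauchy inequality on a polydisc of radii $r_j$ centred at $z=x+iy$, exactly as in~\eqref{3.7} but retaining the weight from~\eqref{4.5}, and using $(1+|x|)\le(1+|x+\Re\zeta|)(1+|\Re\zeta|)$, the $1$-Lipschitz property of $\delta_V$, and the inequality $(a+b)^\rho\le 2^{\rho-1}(a^\rho+b^\rho)$ (valid since $\rho\ge1$), one bounds $\sup_{|\zeta_j|\le r_j}|f(z+\zeta)|$ by $C_{N,\epsilon}\|f\|_{V,N,\epsilon}(1+|x|)^{-N}(1+|r|)^N e^{2^{\rho-1}\epsilon(\delta_V^\rho(x)+|y|^\rho)+2^\rho\epsilon|r|^\rho}$. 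Optimizing over $r$ by~\eqref{3.8} and absorbing the factor $(1+|r_{\mathrm{opt}}|)^N$, which is polynomial in $|\kappa|$, with the aid of Stirling's inequalities~\eqref{3.3}, yields
\begin{multline}
(1+|x|)^N|\partial^\kappa f(x+iy)|\le\\ C_{N,\epsilon'}\|f\|_{V,N,\epsilon}\,2^{|\kappa|}(\epsilon'\rho)^{|\kappa|/\rho}(\kappa!)^{1-1/\rho}\,e^{\epsilon'\delta_V^\rho(x)+\epsilon'|y|^\rho},\qquad \epsilon'>\epsilon .
\notag
\end{multline}
This is the exact analogue of~\eqref{3.10}, and I expect it to be the main obstacle: one has to check that the anisotropic weight $(1+|x|)^N e^{-\epsilon\delta_V^\rho(x)-\epsilon|y|^\rho}$ degrades in a controlled way under the Cauchy shift, the cone term being the delicate one, and being precisely what forces $\rho\ge1$.

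It then remains to assemble the pieces. Fix a norm $\|\cdot\|_{V,N,\epsilon}$ of $\mW^\rho(V)$. Writing $(1+|x|)^N|\partial^\alpha f(z)\,\partial^\beta g(z)|\le[(1+|x|)^N|\partial^\alpha f(z)|]\cdot|\partial^\beta g(z)|$ and applying the displayed estimate to each factor with a parameter $\epsilon_1$ so small that the two resulting exponentials multiply to at most $e^{\epsilon(\delta_V^\rho(x)+|y|^\rho)}$, one gets $\|\partial^\alpha f\,\partial^\beta g\|_{V,N,\epsilon}\le C\,\|f\|_{V,N,\epsilon_1}\|g\|_{V,0,\epsilon_1}\,2^{|\alpha|+|\beta|}(\epsilon'\rho)^{(|\alpha|+|\beta|)/\rho}(\alpha!\,\beta!)^{1-1/\rho}$. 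Multiplying by $|a_{\alpha\beta}|$ and summing, the powers of the factorials combine into $(\alpha!\,\beta!)^{1/2-1/\rho}\le1$, where the inequality uses $\rho\le2$, so that $\sum_{\alpha,\beta}|a_{\alpha\beta}|\,\|\partial^\alpha f\,\partial^\beta g\|_{V,N,\epsilon}\le C'\|f\|_{V,N,\epsilon_1}\|g\|_{V,0,\epsilon_1}\big(\sum_\alpha t^{|\alpha|}\big)^2$ with $t=2(2a)^{1/2}(\epsilon'\rho)^{1/\rho}$; and $t<1$ once $\epsilon_1$ is chosen small enough, so the sum over $\oZ_+^d$ converges (to $(1-t)^{-2d}$). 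Thus the series for $f\star_\vartheta g$ converges absolutely in every norm of the Fréchet space $\mW^\rho(V)$; its sum therefore lies in $\mW^\rho(V)$ and, since $\mW^\rho(V)\hookrightarrow\mathcal E^\rho(\oR^d)$ is continuous, coincides with the element $f\star_\vartheta g$ furnished by \refT{2}. The same estimate gives $\|f\star_\vartheta g\|_{V,N,\epsilon}\le C_{N,\epsilon}\|f\|_{V,N,\epsilon_1}\|g\|_{V,0,\epsilon_1}$, i.e.\ joint continuity. Associativity being inherited from $\mathcal E^\rho(\oR^d)$, this proves the theorem. Note that, in contrast with \refT{2}, no unit is asserted, since the constant function $1$ has infinite norm~\eqref{4.5} and so does not belong to $\mW^\rho(V)$.
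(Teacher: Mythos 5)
Your proof is correct and follows essentially the same route as the paper's: the heart of both arguments is the weighted Cauchy estimate (the paper's \eqref{4.6}) combined with the Taylor-coefficient bound \eqref{3.11} applied to the order-$2$ symbol $e^{i\vartheta^{jk}s_js'_k}$, followed by summation of a geometric series, the paper merely organizing the bookkeeping through $\mW^\rho(V\times V)$ and the diagonal restriction rather than directly on the diagonal. Two cosmetic slips: in your displayed estimate $\epsilon'$ must be taken of order $2^\rho\epsilon$ rather than merely $>\epsilon$ (compare the factors $\delta_V^\rho(2x)$, $|2y|^\rho$ in \eqref{4.6}), which your later choice of a small $\epsilon_1$ silently repairs; and the inequality $(a+b)^\rho\le 2^\rho\max(a,b)^\rho\le 2^\rho(a^\rho+b^\rho)$ needs no restriction on $\rho$, so the hypothesis $\rho\ge1$ is not forced by that step but by the nontriviality of $\mW^\rho$ itself.
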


\begin{proof}
Let $f,g\in \mW^\rho(V)$. It is easy to see that $f\otimes g\in
\mW^\rho(V\times V)$ and
\begin{equation}
\|f\otimes g\|_{V\times V,N,2\epsilon}\le
\|f\|_{V,N,\epsilon}\|g\|_{V,N,\epsilon},
 \notag
\end{equation}
if $|y|$ in~\eqref{4.5} is defined by $|y|=\max_j|y_j|$. It is
clear that if $h(z,z')\in \mW^\rho(V\times V)$, then the function
$\check h(z)=h(z,z)$ belongs to $\mW^\rho(V)$ and $\|\check
h\|_{V,N,\epsilon}\le\|h\|_{V\times V,N,\epsilon}$. Therefore, it
suffices to prove an analogue of Theorem~\ref{T1} for
$\mW^\rho(V)$. To this end we derive an estimate similar
to~\eqref{3.10}. Using Cauchy's inequality and the elementary
inequality $(1+|x|)\le(1+|\xi|)(1+|x-\xi|)$, we obtain
\begin{multline}
(1+|x|)^N|\partial^\kappa f(z)|\leq
\frac{\kappa!}{r^\kappa}\sup_{\zeta=\xi+i\eta\in
D_r}(1+|x|)^N|f(z-\zeta)|\leq\\\leq
\frac{\kappa!}{r^\kappa}(1+|r|)^N\|f\|_{V,N,\epsilon}
e^{\epsilon\delta_V^\rho(2x)+\epsilon|2y|^\rho}\sup_{\zeta=\xi+i\eta\in
D_r} e^{\epsilon|2\xi|^\rho+\epsilon|2\eta|^\rho}\le\\\le
C_{N,\epsilon'}\frac{\kappa!}{r^\kappa}\|f\|_{V,N,\epsilon}
e^{\epsilon\delta_V^\rho(2x)+\epsilon|2y|^\rho
+2\epsilon'\sum_j(2r_j)^\rho}.
 \notag
 \end{multline}
Therefore,~\eqref{3.10} is  replaced by
\begin{equation}
(1+|x|)^N|\partial^\kappa f(z)|\leq
C_{N,\epsilon'}\|f\|_{V,N,\epsilon}\, 2^{|\kappa|}
(2\epsilon'\rho)^{|\kappa|/\rho}(k!)^{1-1/\rho}
e^{\epsilon\delta_V^\rho(2x)+\epsilon|2y|^\rho},
 \label{4.6}
\end{equation}
which holds for any $\epsilon'>\epsilon$. We now can state that if
$G(s)=\sum_\kappa c_\kappa s^\kappa$ is an entire function of
order $\le\rho'=\rho/(\rho-1)$ and finite type, then the
differential operator $G(\partial)=\sum_\kappa c_\kappa
\partial^\kappa$ maps $\mW^\rho(V)$ continuously into itself and
the series $\sum_\kappa c_\kappa
\partial^\kappa f$ converges absolutely in
this space. Indeed, if $\epsilon''>2^\rho\epsilon$, then it
follows from~\eqref{3.11} and~\eqref{4.6} that
\begin{multline}
\sum_\kappa\|\,c_\kappa\partial^\kappa
f\|_{V,N,\epsilon''}\equiv\sum_\kappa
|c_\kappa|\sup_{z=x+iy}(1+|x|)^N|\partial^\kappa
f(z)|e^{-\epsilon''\delta_V^\rho(x)-\epsilon''|y|^\rho}\\\le
C_{\epsilon'}\|f\|_{V,N,\epsilon} \sum_\kappa
2^{|\kappa|}(2\epsilon'
\rho)^{|\kappa|/\rho}(a\rho')^{|\kappa|/\rho'}, \notag
\end{multline}
where $\epsilon'>\epsilon$ and can be taken arbitrarily close to
$\epsilon$. In particular, the operator
$\exp(i\vartheta^{jk}\partial_j\otimes\partial_k)$ is a continuous
automorphism of $\mW^\rho(V\times V)$, and so the theorem is
proved.
\end{proof}

\begin{remark}\label{R2} The family of subalgebras $\mW^2(V)$ of the
maximal algebra $\mathcal E^2(\oR^d)$ with the absolute
convergence property plays a special role. In~\cite{6}
and~\cite{12}, we used the notation $\mS^{1/2}$ instead of
$\mW^2$, because this space can be regarded as the projective
limit of the Gelfand-Shilov spaces $S^{1/2,b}$ as $b\to0$. The
notation $\mW^2$ is here more convenient and  agrees with  the
notation in~\cite{15}, where the isomorphism between $S^{\beta,
b}$ and $W^{1/(1-\beta), b}$ is established for $\beta<1$. The
spaces $\mW^\rho=\bigcap_{b\to0}W^{\rho, b}$ should be
distinguished from the spaces $W^\rho=\bigcup_{b\to\infty}W^{\rho,
b}$ considered by Gelfand and Shilov~\cite{15}.  The space
$\mS^{1/2}=\mW^2$ was proposed in~\cite{6} as a universal test
function space for noncommutative quantum field theory. It was
argued there that the spaces $\mS^{1/2}(V)$ associated with cones
can be used to formulate a generalized causality condition and
shown that some simple noncommutative models  satisfy this
condition.
\end{remark}

\section{Deformed convolution product}\label{S5}

We now turn to describing the Fourier transform of the algebra
$(\mW^\rho,\star_\vartheta)$. As in the case~\cite{15} of the
spaces $W^\rho$, the behavior of the Fourier transforms of the
elements of  $\mW^\rho$ is characterized by an indicator function
which is the  Young dual of  $y^\rho$.

Let, as before,  $\rho>1$ and let $\rho'$  be the dual exponent of
$\rho$, i.e., $1/\rho'+1/\rho=1$. We define ${\mW}_{\rho'}(\oR^d)$
to be the space of all smooth functions on $\oR^d$ with the finite
norms
\begin{equation}
\|g\|_{N, a}=\max_{|\kappa|\le N}\sup_\sigma |\partial^\kappa
g(\sigma))|\,e^{a|\sigma|^{\rho'}}, \quad a>0,\,\,N=0,1,2,\dots.
 \label{5.1}
\end{equation}
In what follows we use the notation $\sigma\cdot x=\sum_j
\sigma_jx^j$, $\sigma\cdot \vartheta
t=\sum_j\sigma_j\vartheta^{jk}t_j$. The Fourier transform  is
defined by  $\hat f(\sigma)=(2\pi)^{-d/2}\int e^{-i\sigma\cdot
x}f(x)dx$.

\begin{theorem}\label{T4}  If $\rho'\ge2$, then for any complex matrix
$\vartheta$, the space ${\mW}_{\rho'}(\oR^d)$ is an associative
topological algebra under the  operation  $g_1,g_2\to
g_1\twast_\vartheta g_2$, where
\begin{multline}
(g_1\twast_\vartheta g_2)(\sigma)= \frac{1}{(2\pi)^{d/2}}\int
g_1(\sigma-\tau) g_2(\tau)e^{-i(\sigma-\tau)\cdot\vartheta \tau}
d\tau=\\\frac{1}{(2\pi)^{d/2}}\int g_1(\tau) g_2(\sigma-\tau)e^{-i
\tau\cdot\vartheta(\sigma-\tau)} d\tau.
 \label{5.2}
\end{multline}
The Fourier transformation is an isomorphism of $(\mW^\rho(\oR^d),
\star_\vartheta)$ onto $({\mW}_{\rho'}(\oR^d), \twast_\vartheta)$,
where $\rho'=\rho(\rho-1)$.
\end{theorem}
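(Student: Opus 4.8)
The plan is to obtain the theorem by transporting the already-established algebra structure of $(\mW^\rho(\oR^d),\star_\vartheta)$ through the Fourier transformation. Two ingredients are needed: (a) $\mathcal F$ is a topological isomorphism of $\mW^\rho(\oR^d)$ onto $\mW_{\rho'}(\oR^d)$, and (b) $\mathcal F$ carries $\star_\vartheta$ into $\twast_\vartheta$. Note first that the hypotheses $\rho>1$ and $\rho'\ge2$ amount to $1<\rho\le2$, so $\refT{3}$ applies with $V=\oR^d$ (for which $\delta_V\equiv0$ and $\mW^\rho(V)=\mW^\rho(\oR^d)$), making $(\mW^\rho(\oR^d),\star_\vartheta)$ an associative topological algebra with jointly continuous product. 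Granted (a) and (b): writing $g_i=\hat f_i$ with $f_i\in\mW^\rho(\oR^d)$ gives $g_1\twast_\vartheta g_2=\widehat{f_1\star_\vartheta f_2}\in\mW_{\rho'}(\oR^d)$, so $\mW_{\rho'}(\oR^d)$ is closed under $\twast_\vartheta$; and since $\twast_\vartheta=\mathcal F\circ\star_\vartheta\circ(\mathcal F^{-1}\times\mathcal F^{-1})$ is a composition of continuous maps, $\twast_\vartheta$ is jointly continuous and, being conjugate to the associative product $\star_\vartheta$, associative, while $\mathcal F$ is by construction an algebra isomorphism.

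For (a) I would argue directly, in the spirit of the classical Gelfand--Shilov Fourier duality between growth of order $\rho$ and decay of order $\rho'$, adapted to these minimal-type (projective-limit) spaces. For $f\in\mW^\rho(\oR^d)$, the estimate~\eqref{4.4} lets one shift the integration contour and write $\hat f(\sigma)=(2\pi)^{-d/2}e^{\sigma\cdot y}\int e^{-i\sigma\cdot x}f(x+iy)\,dx$ for every $y\in\oR^d$; bounding the integral by $C_{N,\epsilon}e^{\epsilon|y|^\rho}$ (with $N>d$ for convergence) and minimizing $e^{\sigma\cdot y+\epsilon|y|^\rho}$ over $y$ yields, through the Young duality $\inf_{t>0}(-t|\sigma|+\epsilon t^\rho)=-c_\rho\,\epsilon^{-1/(\rho-1)}|\sigma|^{\rho'}$, the bound $|\hat f(\sigma)|\le C\,e^{-a|\sigma|^{\rho'}}$ with $a$ of order $\epsilon^{-1/(\rho-1)}$, hence arbitrarily large as $\epsilon\to0$; applying the same to $x^\kappa f$ (which still lies in $\mW^\rho(\oR^d)$, with norms controlled by those of $f$) accounts for the derivatives in~\eqref{5.1}. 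Conversely, for $g\in\mW_{\rho'}(\oR^d)$ the integral $(2\pi)^{-d/2}\int e^{i\sigma\cdot(x+iy)}g(\sigma)\,d\sigma$ defines an entire function $f$; splitting off half of the decay of $g$ gives $|f(x+iy)|\le C_a\sup_\sigma e^{-\sigma\cdot y-\frac a2|\sigma|^{\rho'}}\le C'_\epsilon e^{\epsilon|y|^\rho}$ (using $\rho'/(\rho'-1)=\rho$), and integration by parts moving powers of $x$ onto $\partial_\sigma$ turns the decay of the derivatives of $g$ into the polynomial weights of~\eqref{4.4}. Both maps are continuous, so $\mathcal F$ is the asserted topological isomorphism.

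For (b), recall from $\refT{2}$ that $f\star_\vartheta g$ is the restriction to the diagonal of $\exp(i\vartheta^{jk}\partial_j\otimes\partial_k)(f\otimes g)$. Under $\mathcal F$ (with $\partial_{x^j}\mapsto i\xi_j$) the multiplier $\exp(i\vartheta^{jk}\partial_j\otimes\partial_k)$ becomes multiplication by $e^{-i\xi\cdot\vartheta\eta}$, and restriction to the diagonal becomes the partial integration $h(\xi,\eta)\mapsto(2\pi)^{-d/2}\int h(\sigma-\tau,\tau)\,d\tau$; this gives
\begin{equation}
\widehat{f\star_\vartheta g}(\sigma)=\frac{1}{(2\pi)^{d/2}}\int\hat f(\sigma-\tau)\,\hat g(\tau)\,e^{-i(\sigma-\tau)\cdot\vartheta\tau}\,d\tau=(\hat f\twast_\vartheta\hat g)(\sigma),
 \notag
\end{equation}
the second form in~\eqref{5.2} coming from the substitution $\tau\mapsto\sigma-\tau$. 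This is legitimate because $f\star_\vartheta g\in\mW^\rho(\oR^d)$ is rapidly decreasing and the defining series~\eqref{2.2} converges in a topology finer than that of $L^1$, so the Fourier integration and the $m$-sum may be interchanged; moreover, for $\rho'\ge2$ the decay $e^{-a|\tau|^{\rho'}}$ of $\hat f,\hat g$ (with $a$ arbitrarily large) dominates the at most Gaussian growth $|e^{-i(\sigma-\tau)\cdot\vartheta\tau}|\le e^{C(|\sigma|^2+|\tau|^2)}$ due to $\Im\vartheta$, so the integral on the right converges absolutely.

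The bulk of the work — and the only real obstacle — is part (a): the contour shifts, the Legendre/Young optimization linking $|y|^\rho$ and $|\sigma|^{\rho'}$, and the bookkeeping matching the weighted norms~\eqref{4.4} and~\eqref{5.1}. The hypothesis $\rho'\ge2$ is used twice, once to make the twist $e^{i\vartheta^{jk}s_js'_k}$ of order $2\le\rho'$ (so that $\refT{1}$ and $\refT{3}$ provide the associativity and continuity of $\star_\vartheta$ on $\mW^\rho(\oR^d)$) and once to secure the absolute convergence of~\eqref{5.2} against the complex phase; the hypothesis $\rho>1$ keeps $\rho'$ finite and $\mW^\rho(\oR^d)$ nontrivial.
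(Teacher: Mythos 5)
Your proposal is correct and follows essentially the same route as the paper's proof: you establish that the Fourier (Laplace) transformation is a topological isomorphism of $\mW^\rho(\oR^d)$ onto $\mW_{\rho'}(\oR^d)$ by contour shifts and the Young-dual optimization linking $|y|^\rho$ with $|\sigma|^{\rho'}$ (the paper encodes this in the relation $(a'\rho')^\rho(a\rho)^{\rho'}=1$), and then observe that the bidifferential operator $\exp(i\vartheta^{jk}\partial_j\otimes\partial_k)$ becomes the multiplier $e^{-i\sigma\cdot\vartheta\tau}$ while restriction to the diagonal becomes the partial integration $h(\sigma,\tau)\mapsto(2\pi)^{-d/2}\int h(\sigma-\tau,\tau)\,d\tau$, yielding $\widehat{f_1\star_\vartheta f_2}=\hat f_1\twast_\vartheta\hat f_2$ and transporting the algebra structure from Theorem~\ref{T3}. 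No substantive deviation or gap.
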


\begin{proof}
We note that the bilinear operation defined by~\eqref{5.2} is a
noncommutative deformation of the ordinary convolution (up to a
factor $(2\pi)^{-d/2}$). In the case when $\vartheta$ is a real
skewsymmetric matrix, it is  called the twisted convolution, and
this case corresponds to the Weyl-Moyal star product. For the
properties of the twisted convolution product and its relation to
the Weyl-Heisenberg group, we refer the reader to~\cite{16}
and~\cite{17}. Let us now show that the Laplace transformation
\begin{equation}
g\to f(x+iy)=(2\pi)^{-d/2}\int
e^{i\sigma\cdot(x+iy)}g(\sigma)d\sigma
 \label{5.3}
\end{equation}
maps ${\mW}_{\rho'}(\oR^d)$ onto $\mW^\rho(\oR^d)$. Under the
condition $\rho'\ge2$, the integral in~\eqref{5.3} is  well
defined and analytic in $\oC^d$, and we have the chain of
inequalities
\begin{multline}
(1+|x|)^N|f(x+iy)|\le C_N\max_{|\kappa\le N}|x^\kappa f(x+iy)|\le
C'_N\max_{|\kappa\le N}\int |e^{i\sigma\cdot(x+iy)}\partial^\kappa
g(\sigma) |d\sigma\\\le C'_N\|g\|_{N,(d+1)a'}\int e^{-\sigma\cdot
y-(d+1)a'|\sigma|^{\rho'}}d\sigma\le C_{N,a'}
\|g\|_{N,(d+1)a'}\sup_\sigma e^{-\sigma\cdot
y-a'\sum_j|\sigma_j|^{\rho'}}.
 \notag
\end{multline}
A standard calculation gives
\begin{equation}
\sup_\sigma\left(-\sigma\cdot
y-a'\sum_j|\sigma_j|^{\rho'}\right)=a\sum_j|y_j|^\rho\le
ad|y|^\rho,
 \notag
\end{equation}
where
\begin{equation}
(a'\rho')^\rho(a\rho)^{\rho'}=1.
 \label{5.4}
\end{equation}
We infer that $f\in \mW^\rho(\oR^d)$ and $\|f\|_{N,\epsilon}\le
C_{N,a'}\|g\|_{N,(d+1)a'}$ for $\epsilon\ge ad$. Conversely, let
$f$ belong to  $\mW^\rho(\oR^d)$. Then the integral
$(2\pi)^{-d/2}\int e^{-i\sigma\cdot(x+iy)}f(x+iy)\,dx$ is
independent of  $y$. Denoting it by $g(\sigma)$ and assuming that
$|\kappa|\le N$, we have
 \begin{multline}
|\partial^\kappa g(\sigma)|\le (2\pi)^{-d/2}\inf_y\int
 e^{\sigma\cdot y}\,|(x+iy)^\kappa f(x+iy)|dx\\
\le (2\pi)^{-d/2} \inf_y\int e^{\sigma\cdot y}(1+|y|)^N (1+|x|)^N |f(x+iy)|dx\\
\le C_{N,a}\|f\|_{N+d+1,\epsilon}\inf_y e^{\sigma\cdot
y+a\sum_j|y_j|^\rho}\le C_{N,a}\|f\|_{N+d+1,\epsilon}
e^{-a'|\sigma|^{\rho'}},
 \notag
\end{multline}
where $a>\epsilon$ can be taken arbitrarily close to $\epsilon$,
and $a'$ and $\rho'$ are defined by~\eqref{5.4}. We see that $g\in
{\mW}_{\rho'}(\oR^d)$ and the map $f\to g=\hat f$ from
${\mW}^\rho(\oR^d)$ to ${\mW}_{\rho'}(\oR^d)$ is continuous.

The Fourier transformation converts the  operator
$\exp(i\vartheta^{jk}\partial_j\otimes\partial_k)$ into the
multiplication by the function
$\exp\{-i\sigma_j\vartheta^{jk}\tau_k\}$ which is obviously a
pointwise multiplier of ${\mW}_{\rho'}(\oR^{2d})$ for $\rho'\ge
2$. Let $h(\sigma,\tau)$ be an arbitrary function in
${\mW}_{\rho'}(\oR^{2d})$. The formula
\begin{equation}
\frac{1}{(2\pi)^{d/2}}\int\left\{\frac{1}{(2\pi)^{d/2}}\int
h(\sigma-\tau,\tau)d\tau\right\}e^{i\sigma\cdot x}d\sigma
=\left.\frac{1}{(2\pi)^d}\int h(\sigma,\tau) e^{i\sigma\cdot
x+i\tau\cdot x'}d\sigma d\tau\right|_{x=x'}
 \notag
\end{equation}
shows that the  operation of restriction to the diagonal is
converted by the Fourier transformation into the operation
\begin{equation}
{\mW}_{\rho'}(\oR^{2d})\to {\mW}_{\rho'}(\oR^d)\colon
h(\sigma,\tau)\to \frac{1}{(2\pi)^{d/2}}\int
h(\sigma-\tau,\tau)d\tau.
 \notag
\end{equation}
Consequently,
\begin{equation}
\widehat{f_1\star_\vartheta f_2}=\hat f_1\twast_\vartheta\hat f_2,
 \label{5.6}
\end{equation}
which completes the proof.
\end{proof}

\section{Operator representations of star products}\label{S6}

Let  $T$ be a unitary projective representation of the translation
group of a symplectic space $(X,\omega)$ and let
$\exp\{\tfrac{i}{2\hbar}\omega\}$ be its multiplier, so that we
have the relation
\begin{equation}T_xT_{x'}=
e^{\tfrac{i}{2\hbar}\omega(x,x')}T_{x+x'}.
 \label{6.1}
 \end{equation}
We let $H$  denote  the Hilbert space on which this representation
acts. The Weyl quantization  associates Hilbert space operators
with functions on $X\approx \oR^{2n}$ in the following way
 \begin{equation}
f \longmapsto\mathfrak f=\frac{1}{(2\pi)^n}\int\! \tilde f(s)\,
T_{\hbar s} ds,
 \label{6.2}
\end{equation}
where $\tilde f$ is the symplectic Fourier transform of $f$,
defined by
\begin{equation}
\tilde f(s)=\frac{1}{(2\pi)^n}\int f(x)e^{i\omega(x,s)}dx.
 \label{6.3}
\end{equation}
Letting, as before,   $\Omega$ denote the matrix of the symplectic
form $\omega$, we have $\omega(x,s)=x\cdot \Omega s
=x^j\Omega_{jk}s^k$. Then $\tilde f(s)=\hat f(-\Omega s)$, and
after the change of variables $\sigma=-\Omega s$, the
formula~\eqref{5.6} can be written as
\begin{equation}
\widetilde{f_1\star_\vartheta f_2}=\tilde
f_1\twast_{\widetilde\vartheta}\tilde f_2,\qquad\text{where}\quad
\widetilde\vartheta=\Omega^t\vartheta\,\Omega=-\Omega\,\vartheta\,\Omega.
 \label{6.4}
\end{equation}
The integral in~\eqref{6.2} is well defined if $\tilde f$ is
integrable, and the Weyl transform takes the star
product~\eqref{2.1} into the operator product.

Let now the matrix $\vartheta$ be of the form given
by~\eqref{2.3}, and consequently,
\begin{equation}
\widetilde\vartheta=\widetilde Q-\frac12
\Omega,\qquad\text{where}\quad \widetilde Q=-\Omega Q\Omega.
 \label{6.5}
\end{equation}
Then the Weyl correspondence~\eqref{6.2} should be replaced by
\begin{equation}
 f \longmapsto\mathfrak f=\frac{1}{(2\pi)^n}\int\! \tilde
f(s)\,e^{\tfrac{i\hbar}{2} s\cdot \widetilde Qs} T_{\hbar s} ds.
 \label{6.6}
\end{equation}

\begin{theorem}\label{T5} If $1\le\rho\le2$, then for any $\vartheta$
of the form~\eqref{2.3}, the map~\eqref{6.6} is a continuous
homomorphism  of the algebra
$(\mW^\rho(\oR^{2n}),\star_{\hbar\vartheta})$ into the algebra
$\mathcal B(H)$ of bounded operators on the Hilbert space  $H$.
\end{theorem}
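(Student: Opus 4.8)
The plan is to establish, in order, (i) that the integral~\eqref{6.6} defines a bounded operator, (ii) that the resulting map $f\mapsto\mathfrak f$ is continuous, and (iii) that it is multiplicative. Steps (i) and (ii) rest on the decay of the Fourier transform supplied by~\refT{4}; step (iii) is a direct computation of the operator product using the multiplier relation~\eqref{6.1}.

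For (i) and (ii), I would start from $\|T_{\hbar s}\|=1$, valid since $T$ is unitary, so the $\mathcal B(H)$-norm of the integrand in~\eqref{6.6} is at most $|\tilde f(s)|\,\bigl|e^{\tfrac{i\hbar}{2}s\cdot\widetilde Qs}\bigr|$. Writing $\widetilde Q=A+iB$ with $A$, $B$ real symmetric, one has $\bigl|e^{\tfrac{i\hbar}{2}s\cdot\widetilde Qs}\bigr|=e^{-\tfrac\hbar2 s\cdot Bs}\le e^{c|s|^2}$ with $c$ depending only on $\hbar$ and $Q$. On the other hand, $f\in\mW^\rho(\oR^{2n})$ implies, by~\refT{4}, that $\hat f\in\mW_{\rho'}(\oR^{2n})$, hence $|\tilde f(s)|=|\hat f(-\Omega s)|\le\|\hat f\|_{0,a}\,e^{-a|s|^{\rho'}}$ for every $a>0$. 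Since the hypothesis $\rho\le2$ is equivalent to $\rho'\ge2$, the function $e^{-a|s|^{\rho'}+c|s|^2}$ is integrable over $\oR^{2n}$: for $\rho'>2$ with any $a$, and for $\rho'=2$ once $a>c$. Thus the Bochner integral~\eqref{6.6} exists in $\mathcal B(H)$ (strong continuity of $s\mapsto T_{\hbar s}$ furnishes strong measurability), and
\[
\|\mathfrak f\|\le\frac1{(2\pi)^n}\int_{\oR^{2n}}|\tilde f(s)|\,e^{-\tfrac\hbar2 s\cdot Bs}\,ds\le C\,\|\hat f\|_{0,a}.
\]
Because $f\mapsto\hat f$ is continuous from $\mW^\rho(\oR^{2n})$ to $\mW_{\rho'}(\oR^{2n})$ by~\refT{4}, this proves (i) and (ii) at once.

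For (iii), I would compute the product directly. By~\eqref{6.1}, $T_{\hbar s}T_{\hbar s'}=e^{\tfrac{i\hbar}{2}\omega(s,s')}T_{\hbar(s+s')}$, so
\[
\mathfrak f_1\mathfrak f_2=\frac1{(2\pi)^{2n}}\int\!\!\int\tilde f_1(s)\tilde f_2(s')\,e^{\tfrac{i\hbar}{2}(s\cdot\widetilde Qs+s'\cdot\widetilde Qs'+\omega(s,s'))}\,T_{\hbar(s+s')}\,ds\,ds'.
\]
The estimates of step (i) show this double integral is absolutely convergent, so Fubini's theorem allows the substitution $s=u-\tau$, $s'=\tau$ (Jacobian $1$) and integration in $\tau$ first. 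Using the symmetry of $\widetilde Q$, the antisymmetry of $\Omega$, and $\omega(u-\tau,\tau)=\omega(u,\tau)$, the exponent rearranges into $\tfrac{i\hbar}{2}u\cdot\widetilde Qu-i\hbar\,(u-\tau)\cdot\widetilde\vartheta\tau$, with $\widetilde\vartheta=\widetilde Q-\tfrac12\Omega$ as in~\eqref{6.5}. The inner $\tau$-integral is then exactly $(\tilde f_1\twast_{\hbar\widetilde\vartheta}\tilde f_2)(u)$, which equals $\widetilde{f_1\star_{\hbar\vartheta}f_2}(u)$ by~\eqref{6.4} (with $\hbar$ reinstated, the relevant matrix being $\hbar\widetilde\vartheta$). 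Hence
\[
\mathfrak f_1\mathfrak f_2=\frac1{(2\pi)^n}\int\widetilde{f_1\star_{\hbar\vartheta}f_2}(u)\,e^{\tfrac{i\hbar}{2}u\cdot\widetilde Qu}\,T_{\hbar u}\,du,
\]
and the right-hand side is, by definition, the operator that~\eqref{6.6} assigns to $f_1\star_{\hbar\vartheta}f_2$; convergence of the last integral follows from $f_1\star_{\hbar\vartheta}f_2\in\mW^\rho(\oR^{2n})$ (\refT{3}) together with step (i). This gives (iii).

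The point demanding care is the competition in step (i) between the ordering factor $e^{\tfrac{i\hbar}{2}s\cdot\widetilde Qs}$ and the decay of $\hat f$: since $Q$ is allowed to be complex, this factor is a genuine Gaussian rather than a pure phase, so absolute convergence of both~\eqref{6.6} and the double integral for $\mathfrak f_1\mathfrak f_2$ hinges precisely on the at-least-Gaussian decay $e^{-a|s|^{\rho'}}$, $\rho'\ge2$, guaranteed by~\refT{4} under the hypothesis $\rho\le2$. Once this integrability is secured, the homomorphism property is mere bookkeeping with the quadratic forms in the exponent, drawing only on~\eqref{6.1}, \eqref{6.4}, \eqref{6.5}, and the (anti)symmetry of $\widetilde Q$ and $\Omega$.
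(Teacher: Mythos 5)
Your proposal is correct and follows essentially the same route as the paper: the norm bound comes from the Gaussian-or-better decay $e^{-a|s|^{\rho'}}$, $\rho'\ge 2$, supplied by Theorem~4, and multiplicativity is the same Fubini computation with the substitution $t'=s-t$, the multiplier relation~\eqref{6.1}, and the symmetry of $\widetilde Q$ — you merely run the identity from $\mathfrak f_1\mathfrak f_2$ toward $\widetilde{f_1\star_{\hbar\vartheta}f_2}$ while the paper runs it in the opposite direction.
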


\begin{proof}
It follows from Theorem~\ref{T4} and from the invariance of
$\mW^\rho(\oR^{2n})$ under linear change of variables that the
symplectic Fourier transformation is a topological and algebraic
isomorphism of  $(\mW^\rho(\oR^{2n}),\star_{\hbar\vartheta})$ onto
 $(\mW_{\rho'}(\oR^{2n}),\twast_{\hbar\widetilde\vartheta})$.
Because $\rho'\ge2$, we have
\begin{equation}
\|\mathfrak f\|\le\frac{1}{(2\pi)^n}\int|\tilde
f(s)|e^{\tfrac{\hbar}{2}|\widetilde Q|\cdot|s|^2}ds\le C_a\|\tilde
f\|_{0,a},
 \notag
\end{equation}
where $a>(\hbar/2)|\widetilde Q|$. It remains to show that the map
 \begin{equation}
\mW_{\rho'}(\oR^{2n})\longrightarrow \mathcal B(H)\colon \tilde f
\longmapsto\mathfrak f=\frac{1}{(2\pi)^n}\int\! \tilde
f(s)\,e^{\tfrac{i\hbar}{2} s\cdot \widetilde Qs} T_{\hbar s} ds
\notag
\end{equation}
is an algebra homomorphism. Using~\eqref{6.1}, \eqref{6.5}, the
symmetry of the matrix $\widetilde Q$, and the Fubini theorem, we
obtain
\begin{multline}
\mathfrak f=\frac{1}{(2\pi)^n}\int\! (\tilde
f_1\twast_{\hbar\widetilde \vartheta}f_2)(s)\,e^{\tfrac{i\hbar}{2}
s\cdot \widetilde Qs} T_{\hbar s}ds=\\=\frac{1}{(2\pi)^{2n}}\iint
\tilde f_1(t)\tilde f_2(s-t)e^{-i\hbar\,
t\cdot\widetilde\vartheta(s-t) +\tfrac{i\hbar}{2} s\cdot
\widetilde Qs}
T_{\hbar s} dt ds =\\
=\frac{1}{(2\pi)^{2n}}\iint \tilde f_1(t)\tilde
f_2(t')e^{-i\hbar\,t\cdot\widetilde\vartheta t' +\tfrac{i\hbar}{2}
(t+t')\widetilde Q(t+t')} T_{\hbar(t+t') }dt
dt'=\\
=\frac{1}{(2\pi)^{2n}}\iint \tilde f_1(t)\tilde f_2(t')
e^{\tfrac{i\hbar}{2}t\cdot\widetilde
Qt+\tfrac{i\hbar}{2}t'\cdot\widetilde Qt'} T_{\hbar t}T_{\hbar
t'}dt dt'=\mathfrak f_1\mathfrak f_2.
 \notag
\end{multline}
The theorem is proved.
\end{proof}

In quantum mechanics on phase space, the  following notation is
usually used:  $x=(p_1,\dots,p_n; q^1,\dots, q^n)$, where $q^j$
are the coordinate variables and $p_j$ are their conjugate
momentums. The standard projective representation of the phase
space translations in the Hilbert space $L^2(\oR^n)$ is written as
\begin{equation}
T_{p,q}=e^{\tfrac{i}{\hbar}(p\mathfrak q-q\mathfrak p)},
 \label{6.8}
 \end{equation}
where  $\mathfrak q^j$ is  the operator of multiplication by the
$j$th coordinate function and  $\frac{i}{\hbar}\mathfrak p_j$ is
the operator of differentiation with respect to the same
coordinate. Let  $s=(u,v)$ and $\widetilde
Q=\dfrac12\left(\begin{matrix} 0&I_n\\I_n&0\end{matrix}\right)$.
Then  $s\cdot\widetilde Qs=u\cdot v$ and the
correspondence~\eqref{6.6}  takes the form
\begin{equation}
f \longmapsto\mathfrak f=\frac{1}{(2\pi)^n}\iint\! \tilde f(u,v)
e^{i\frac{\hbar}{2}u\cdot v+ i(u\mathfrak q-v\mathfrak p)} du dv=
\frac{1}{(2\pi)^n}\iint\! \tilde f(u,v) e^{iu\mathfrak
q}e^{-iv\mathfrak p} du dv,
 \label{6.9}
 \end{equation}
which defines the so-called $qp$-quantization, i.e., the standard
ordering of the operators. We let $j_{\rm st}$ denote  the map
defined by~\eqref{6.9}. In the theory of pseudodifferential
operators, it is usually called the Kohn-Nirenberg correspondence.
Under this correspondence the monomial $p^{\kappa} q^{\lambda}$
transforms into the operator $\mathfrak q^{\lambda}\mathfrak
p^{\kappa}$. Formally, this follows from the relations
\begin{equation}
\widetilde{p^{\kappa} q^{\lambda}}=(2\pi)^n\,
(-i)^{|\kappa|}i^{\lambda|}\partial^{\kappa}\delta(v)
\partial^{\lambda}\delta(u),\qquad
\partial^{\kappa}_v\partial^{\lambda}_u e^{iu\mathfrak
q}e^{-iv\mathfrak p}=(-i)^{|\kappa|}i^{\lambda|}\mathfrak
q^{\lambda} e^{iu\mathfrak q}\mathfrak p^{\kappa} e^{iv\mathfrak
p} , \notag
\end{equation}
but a rigorous proof is based on the extension  of map~\eqref{6.9}
to tempered distributions~\cite{18} An alternative proof, based on
extending it to the space $\mathcal E^2$, is given in
Sec.~\ref{S7}. The star product corresponding  to the standard
ordering is written as
\begin{equation}
(f\star_{\rm st}
g)(p,q)=f(p,q)\,e^{-i\hbar\,\overleftarrow{\partial_{p_j}}\,
\,\overrightarrow{\partial_{q^j}}}g(p,q).
 \label{6.10}
 \end{equation}
Let us point out  some other important cases. The matrix
$\widetilde Q=-\dfrac12\left(\begin{matrix}
0&I_n\\I_n&0\end{matrix}\right)$ defines the correspondence
\begin{equation}
f \longmapsto\mathfrak f=\frac{1}{(2\pi)^n}\iint\! \tilde f(u,v)
e^{-iv\mathfrak p} e^{iu\mathfrak q} du dv,
 \notag
 \end{equation}
under which the monomial $p^{\kappa} q^{\lambda}$  transforms to
$\mathfrak p^{\kappa}\mathfrak q^{\lambda}$, and so we have the
anti-standard ordering. Now, let $\widetilde
Q=-\dfrac{i}{2}\left(\begin{matrix}
\varsigma^{-2}I_n&0\\0&\varsigma^2I_n\end{matrix}\right)$, where
the dimensional parameter $\varsigma$ is such that $\varsigma q$
and $\varsigma^{-1}p$ are of the same dimension. In this case, it
is convenient to use the holomorphic variables
\begin{equation}
z^j=\frac{1}{\sqrt2}(\varsigma q^j+i\varsigma^{-1}p_j),\quad \bar
z_j=\frac{1}{\sqrt2}(\varsigma q^j-i\varsigma^{-1}p_j).
\label{6.11}
\end{equation}
By~\ref{P1}, the elements of $\mathcal E^\rho(\oR^{2n})$ and
$\mW^\rho(\oR^{2n})$ can be regarded as entire functions of these
variables, considered as independent variables. The restriction to
the real space $\oR^{2n}$ amounts then  to identifying $\bar z$
with the complex conjugate of $z$. With these variables, the
formula~\eqref{6.3} becomes
\begin{equation}
\tilde  f(w,\bar w)=\frac{1}{(2\pi i)^n}\iint f(z,\bar
z)e^{z\cdot\bar w-\bar z\cdot w}d z\wedge d\bar z,
 \notag
\end{equation}
where $w=(\varsigma u-i\varsigma^{-1}v)/\sqrt2$. As usual, we
introduce the annihilation and creation operators
\begin{equation}
 \mathfrak a^j=\frac{1}{\sqrt2}(\varsigma\mathfrak
q^j+i\varsigma^{-1}\mathfrak p_j),\qquad \mathfrak
a_j^\dagger=\frac{1}{\sqrt2}(\varsigma\mathfrak
q^j-i\varsigma^{-1}\mathfrak p_j).
 \notag
\end{equation}
Then the representation operator~\eqref{6.8} can be written as
\begin{equation}
T_z=e^{\tfrac{1}{\hbar}(z\mathfrak a^\dagger-\bar z \mathfrak
a)}=e^{-\tfrac{1}{2\hbar}z\bar z}e^{\tfrac{1}{\hbar}z\mathfrak
a^\dagger} e^{-\tfrac{1}{\hbar}\bar z\mathfrak a}.
 \notag
\end{equation}
In terms of the new variables, the quadratic form
$s\cdot\widetilde Qs$ becomes $-iw\cdot \bar w$, and the
correspondence~\eqref{6.6} takes the form
\begin{equation}
f \longmapsto \mathfrak f=\frac{1}{(2\pi i)^n}\iint \tilde
f(w,\bar w)\, e^{w\mathfrak a^\dagger} e^{-\bar w\mathfrak
a}\,dw\wedge d\bar w.
 \label{6.12}
\end{equation}
Therefore, in this case we have the Wick quantization, or in other
words, the normal ordering, under which the monomial $\bar
z^{\kappa} z^{\lambda}$   corresponds to the operator $(\mathfrak
a^\dagger)^{\kappa}\mathfrak a^{\lambda}$. The corresponding star
product is
\begin{equation}
(f\star_{Wick} g)(z,\bar z)=f(z,\bar z)
e^{\hbar\,\overleftarrow{\partial_z}\,
\cdot\overrightarrow{\partial_{\bar z}}}g(z,\bar z).
 \notag
\end{equation}
By changing the sign of $\widetilde Q$, we obtain the quadratic
form $iw\cdot \bar w$, which defines the anti-Wick quantization
and the anti-normal ordering  $\mathfrak a^{\lambda}(\mathfrak
a^\dagger)^{\kappa}$.

\section{Extensions of representations}\label{S7}
We now show that the map~\eqref{6.6} can be naturally extended to
functions  in $\mathcal E^\rho(\oR^{2n})$, where $\rho\le2$. For
clarity we first consider the representation in the space of
square integrable functions $L^2(\oR^n)$. In doing so we use
another function space whose elements  decrease at infinity faster
than the elements of $\mathcal E^\rho$ increase.

Let $\gamma>1$ and $\rho>1$. The space $W^\gamma_\rho(\oR^n)$
consists of all entire analytic functions  satisfying the
condition
\begin{equation}
|\psi(q+iy)|\leq C e^{-a|q|^\rho+b|y|^\gamma}
 \label{8.1}
 \end{equation}
where $a$, $b$, and $C$ are positive constants depending on
$\psi$. As before, we  really deal with the restrictions of these
functions to  $\oR^n$, using the analytic continuation as an
auxiliary tool. The space $W^\gamma_\rho$ coincides with the
Gelfand-Shilov space $S_{1/\rho}^{1-1/\gamma}$ and is nontrivial
if and only if $\gamma\ge\rho$. Its natural topology is the
inductive limit topology with respect to the family of Banach
spaces $W^{\gamma, b}_{\rho, a}$ equipped with the norms
\begin{equation}
\|\psi\|_{a,b}=\sup_{q,y}|\psi(q+iy)|e^{a|q|^\rho-b|y|^\gamma}.
 \label{8.2}
\end{equation}
Using reasoning similar to that in the proof of Theorem~\ref{T4},
it is easy to show that $W^\gamma_\rho$ is invariant under the
Fourier transformation if and only if $1/\gamma+1/\rho=1$. It is
important that $W^\gamma_\rho$ is dense in $L^2$. Indeed, if $f\in
L^2$, $\varphi\in W^\gamma_\rho$ and $\int \varphi(q)dq=1$, then
the sequence $f_\nu(q)=\nu^n\int f(q-\xi)\varphi(\nu\xi)d\xi$
belongs to $W^\gamma_\rho$ and converges to $f$ in the norm of
$L^2$ as $\nu \to\infty$.

We let $\mathcal A_{W^\gamma_\rho}(L^2)$ denote the algebra of
operators which are defined on the subspace $W_\rho^\gamma$ of the
Hilbert space $L^2$ and map $W_\rho^\gamma$ continuously into
itself, and we endow this algebra with the topology  of uniform
convergence on bounded subsets of $W_\rho^\gamma$.

\begin{theorem}\label{T6}  If $\,1<\rho\le2$, then the map~\eqref{6.6}
defined in Theorem~\ref{T5} as an algebra homomorphism  from
$\mW^\rho(\oR^{2n})$ to $\mathcal B(L^2)$ extends uniquely to a
continuous homomorphism of the algebra $(\mathcal
E^\rho(\oR^{2n}), \star_{\hbar\vartheta})$ to the algebra
$\mathcal A_{W^{\rho'}_\rho}(L^2)$, where $\rho'=\rho/(\rho-1)$.
\end{theorem}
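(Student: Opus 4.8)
The plan is to extend the standard-ordering map $j_{\rm st}$ of~\eqref{6.9} and to reduce the general case to it: for a $\vartheta$ of the form~\eqref{2.3} the symplectic Fourier transform turns the passage from $\tilde f$ to $\tilde f\,\exp(\tfrac{i\hbar}{2}s\cdot(\widetilde Q-\widetilde Q_{\rm st})s)$ into the action of a constant-coefficient second-order differential operator $G_\vartheta(\partial)$ on the symbol $f$, so that $\mathfrak f=j_{\rm st}(G_\vartheta(\partial)f)$; since $G_\vartheta$ is entire of order $2$ and finite type, and $2\le\rho'=\rho/(\rho-1)$ exactly because $\rho\le2$, Theorem~\ref{T1} makes $G_\vartheta(\partial)$ a continuous endomorphism of $\mathcal E^\rho(\oR^{2n})$, and by the argument of Theorem~\ref{T3} also of $\mW^\rho(\oR^{2n})$, so it is enough to extend $j_{\rm st}$. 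Unwinding $T_{\hbar s}=e^{i(u\mathfrak q-v\mathfrak p)}$ in~\eqref{6.9} and integrating out the auxiliary variables gives, for $\psi\in W^{\rho'}_\rho$, the pseudodifferential representation
\begin{equation}
(j_{\rm st}(f)\psi)(q)=\frac{1}{(2\pi)^{n/2}}\int f(\hbar\xi,q)\,e^{i\xi\cdot q}\,\hat\psi(\xi)\,d\xi ;
\notag
\end{equation}
for $f\in\mW^\rho$ this manipulation is legitimate (then $\tilde f$ decays like $e^{-a|s|^{\rho'}}$), and for arbitrary $f\in\mathcal E^\rho(\oR^{2n})$ I would take the right-hand side as the definition of $\mathfrak f\psi$.

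The heart of the matter is to show that this formula maps $W^{\rho'}_\rho$ continuously into itself, with constants controlled by the $\mathcal E^\rho$-norms of $f$. Since $1/\rho'+1/\rho=1$ the space $W^{\rho'}_\rho$ is Fourier-invariant, so $\hat\psi$ is entire with $|\hat\psi(\xi+i\zeta)|\le\|\hat\psi\|_{a,b}\,e^{-a|\xi|^\rho+b|\zeta|^{\rho'}}$ as in~\eqref{8.1}, while by Proposition~\ref{P1} one has $|f(\hbar(\xi+i\zeta),q+iy)|\le\|f\|_\epsilon\,e^{\epsilon c(\hbar^\rho|\xi|^\rho+\hbar^\rho|\zeta|^\rho+|q|^\rho+|y|^\rho)}$ for every $\epsilon>0$. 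Choosing $\epsilon$ with $\epsilon\hbar^\rho c<a$ makes the $\xi$-integral converge absolutely; continuing $q$ to $q+iy$ and shifting the contour $\xi\mapsto\xi+i\zeta$ (legitimate by Cauchy's theorem because of the $e^{-a|\xi|^\rho}$ decay) leaves, for each $\zeta\in\oR^n$, a factor $e^{-\zeta\cdot q+B|\zeta|^{\rho'}}$ (with $B$ close to $b$) times $\int e^{-\xi\cdot y-(a-\epsilon\hbar^\rho c)|\xi|^\rho}d\xi$. The integral is $\le C\,e^{c_1|y|^{\rho'}}$ by the elementary Young/Legendre inequality already used in~\eqref{5.4}; minimizing the remaining exponent over $\zeta$ gives, by the same inequality and the identity $\rho'/(\rho'-1)=\rho$, a factor $\le e^{-c_B|q|^\rho}$ with $c_B>0$. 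Absorbing the leftover $\epsilon c|y|^\rho$ into the $|y|^{\rho'}$ growth (here $\rho'\ge\rho$ is used) one gets $|(\mathfrak f\psi)(q+iy)|\le C\|f\|_\epsilon\|\hat\psi\|_{a,b}\,e^{-(c_B-\epsilon c)|q|^\rho+b''|y|^{\rho'}}$, so for $\epsilon$ small enough ($c_B-\epsilon c>0$) we have $\mathfrak f\psi\in W^{\rho'}_\rho$ with $\|\mathfrak f\psi\|\lesssim\|f\|_\epsilon\|\hat\psi\|_{a,b}$ in a suitable norm~\eqref{8.2}. Since any bounded subset of $W^{\rho'}_\rho$ lies bounded in some Banach space $W^{\rho',b_1}_{\rho,a_1}$ and the Fourier transform is continuous into $W^{\rho'}_\rho$, this yields continuity of $f\mapsto\mathfrak f$ from $\mathcal E^\rho(\oR^{2n})$ to $\mathcal A_{W^{\rho'}_\rho}(L^2)$.

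The remaining points are soft. The map just built agrees with~\eqref{6.6} on $\mW^\rho(\oR^{2n})$, hence extends it, and it is the unique continuous such extension because $\mW^\rho(\oR^{2n})$ is dense in $\mathcal E^\rho(\oR^{2n})$: by Proposition~\ref{P2} it suffices to approximate monomials $z^\alpha$, and for a fixed $\varphi\in\mW^\rho$ with $\varphi(0)=1$ one has $z^\alpha\varphi(z/\nu)\in\mW^\rho$ and $z^\alpha\varphi(\cdot/\nu)\to z^\alpha$ in the norms~\eqref{3.5} (split $\oC^{2n}$ into $|z|\le R$ and $|z|>R$). Finally the homomorphism property passes to the limit: for $f,g\in\mathcal E^\rho$ take $f_\nu,g_\nu\in\mW^\rho$ with $f_\nu\to f$, $g_\nu\to g$; then $f_\nu\star_{\hbar\vartheta}g_\nu\to f\star_{\hbar\vartheta}g$ by Theorem~\ref{T2}, and $\mathfrak f_\nu\mathfrak g_\nu\to\mathfrak f\mathfrak g$ because composition is (hypo)continuous on $\mathcal A_{W^{\rho'}_\rho}(L^2)$ and the factors stay bounded, so by Theorem~\ref{T5} the operator assigned to $f\star_{\hbar\vartheta}g$ equals $\mathfrak f\mathfrak g$. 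The one delicate step is the core estimate above — pulling genuine $e^{-c|q|^\rho}$ decay of $\mathfrak f\psi$ out of the oscillatory integral by a contour shift while keeping the imaginary growth at order $\rho'$, and watching that the type $\epsilon$ in the bound on $f$, though arbitrarily small, must be chosen small relative to the fixed decay parameter $a$ of $\psi$ and to $\hbar$; everything else is routine.
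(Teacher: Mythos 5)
Your overall architecture coincides with the paper's: both reduce a general ordering $Q$ to the standard one via the identity $j_Q(f)=j_{\rm st}\bigl(e^{\frac{i\hbar}{2}(Q_{\rm st}-Q)^{jk}\partial_j\partial_k}f\bigr)$ together with Theorem~\ref{T1} (which needs exactly $\rho'\ge2$, i.e.\ $\rho\le2$), and both finish by density of $\mW^\rho$ in $\mathcal E^\rho$ (approximating $P(z)$ by $P(z)f_0(z/\nu)$) plus continuity of the star product and of composition in the bounded-convergence topology. Where you genuinely diverge is the central step, the extension of $j_{\rm st}$ itself. The paper expands the symbol in its Taylor series, defines $\mathfrak f\psi=\sum_{\kappa,\lambda}c_{\kappa,\lambda}(-i\hbar)^{|\kappa|}q^\lambda\partial^\kappa\psi$ as an infinite-order differential operator, and proves absolute convergence in $W^{\rho'}_\rho$ term by term from the Cauchy estimates \eqref{8.6} and \eqref{8.8}; consistency with \eqref{6.9} on $\mW^\rho$ then requires a separate matrix-element computation. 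You instead keep the Kohn--Nirenberg amplitude integral $(2\pi)^{-n/2}\int f(\hbar\xi,q)e^{i\xi\cdot q}\hat\psi(\xi)\,d\xi$ as the definition on all of $\mathcal E^\rho$ and extract the $e^{-c|q|^\rho}$ decay of $\mathfrak f\psi$ by a contour shift $\xi\mapsto\xi+i\zeta$ followed by Legendre duality in $\zeta$, using the Fourier invariance of $W^{\rho'}_\rho$. Your route makes agreement with \eqref{6.9} on $\mW^\rho$ nearly tautological but puts the burden on justifying the deformation of contour and the interchange of integrals (which does work: the minimal type of $f$ gives growth $e^{\epsilon\hbar^\rho c|\xi|^\rho}$ beaten by the fixed decay $e^{-a|\xi|^\rho}$ of $\hat\psi$ once $\epsilon$ is small relative to $a$, $b$, and $\hbar$ --- the same ``$\epsilon$ small compared to $a,b$'' condition that appears in the paper's estimate \eqref{8.10}); the paper's route avoids oscillatory integrals entirely at the cost of a combinatorial two-index estimate. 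Both are sound; your version generalizes more readily to symbols given by integral rather than series data, while the paper's makes the action on $W^{\rho'}_\rho$ manifestly an infinite-order differential operator.
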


\begin{proof}
Let $f(p,q)\in \mathcal E^\rho(\oR^{2n})$. This function can be
written as
 \begin{equation}
f(p,q)=\sum_{\kappa, \lambda}c_{\kappa,\lambda}
p^{\kappa}q^{\lambda},
 \label{8.5}
\end{equation}
where by the estimate~\eqref{3.9} the coefficients
$c_{\kappa,\lambda}=\dfrac{1}{\kappa!\lambda!}
\partial_p^{\kappa}\partial_q^{\lambda} f(0,0)$
satisfy
\begin{equation}
|c_{\kappa,\lambda}|\le \|f\|_\epsilon\, \frac{(2^\rho\epsilon
\rho e)^{(|\kappa|+|\lambda|)/\rho}}{(\kappa^\kappa
\lambda^\lambda)^{1/\rho}}.
 \label{8.6}
\end{equation}
We first extend the transformation~\eqref{6.9} that defines  the
standard ordering. If such a continuous  extension to $\mathcal
E^\rho$ exists, then it is unique because $\mW^\rho$ is dense in
$\mathcal E^\rho$ by~\ref{P2}. Indeed, every polynomial $P(z)$ can
be approximated in the topology of $\mE^\rho$ by a sequence of
functions of the form $f_\nu(z)=P(z)f_0(z/\nu)$, where $f_0\in
\mW^\rho$ is such that $f_0(0)=1$. Let $\psi\in
W^{\rho'}_\rho(\oR^n)$. We define an operator $\mathfrak f$ by
 \begin{equation}
 (\mathfrak f\psi)(q)\eqdef \sum_{\kappa, \lambda}c_{\kappa,\lambda}
(-i\hbar)^{|\kappa|} q^{\lambda}
\partial^{\kappa}\psi(q)
 \label{8.7}
 \end{equation}
and show that the series on the right-hand side of~\eqref{8.7}
converges to an element of $W^{\rho'}_\rho$, and $\mathfrak f$ is
hence well defined as an operator from $W^{\rho'}_\rho$ ito
itself. To this end, we estimate the derivatives of $\psi$ in the
complex domain, using  the Cauchy inequality again. Let $z=q+iy$.
Taking into account that $\rho'\ge\rho$, we get
\begin{multline}
|\partial^\kappa \psi(z)|\leq
\frac{\kappa!}{r^\kappa}\sup_{\zeta=\xi+i\eta\in
D_r}|\psi(z-\zeta)|\leq \frac{\kappa!}{r^\kappa}\|\psi\|_{a,b}
e^{-a|q/2|^\rho+b|2y|^{\rho'}}\sup_{\xi+i\eta\in D_r}
e^{a|\xi|^\rho+b|2\eta|^{\rho'}}\le\\\le
\frac{\kappa!}{r^\kappa}\|\psi\|_{a,b}
e^{-a|q/2|^\rho+b|2y|^{\rho'}}e^{a+(a+2^{\rho'}
b)\sum_jr_j^{\rho'}}.
 \notag
 \end{multline}
Using~\eqref{3.8} with   $\rho'$ instead of $\rho$, we obtain
 \begin{equation}
|\partial^\kappa\psi(z)|\le C\|\psi\|_{a,b} b_1^{|\kappa|/\rho'}
\frac{\kappa!}{(\kappa^\kappa)^{1/\rho'}}
e^{-a|q/2|^\rho+b|2y|^{\rho'}},\qquad\text{where}\quad
b_1=(a+2^{\rho'}b)\rho' e.
 \label{8.8}
\end{equation}
It follows from~\eqref{8.6} and~\eqref{8.8} that
\begin{multline}
\sum_{\kappa, \lambda}|c_{\kappa,\lambda}\hbar^{|\kappa|}
q^{\lambda}\partial^{\kappa}\psi(z)|\leq\\\leq
C\|f\|_\epsilon\|\psi\|_{a,b} e^{-a|q/2|^\rho+b|2y|^{\rho'}}
\sum_{\kappa} \frac{(2^\rho\epsilon \rho
e\hbar^\rho)^{|\kappa|/\rho}
b_1^{|\kappa|/\rho'}\kappa!}{(\kappa^\kappa)^{1/\rho}(\kappa^\kappa)^{1/\rho'}}
\sum_{\lambda} |q^{\lambda}|\frac{(2^\rho\epsilon \rho
e)^{|\lambda|/\rho}}{(\lambda^\lambda)^{1/\rho}},
 \notag
 \end{multline}
where
$(\kappa^\kappa)^{1/\rho}(\kappa^\kappa)^{1/\rho'}=\kappa^{\kappa}$.
Using~\eqref{3.3}, we see that the sum over $\kappa$ does not
exceed $\sum_{\kappa}(\epsilon \rho
eb_1\hbar^\rho)^{|\kappa|/\rho}$ and  is dominated by a constant
if $\epsilon<1/(\rho eb_1\hbar^\rho)$. The sum over $\lambda$ is
estimated similar to the sum in~\eqref{3.6} and does not exceed
$C_{\epsilon'}e^{\epsilon' n|2q|^\rho}$, where
$\epsilon'>\epsilon$ and can be taken arbitrarily close to
$\epsilon$. We infer that
\begin{equation}
\sum_{\kappa, \lambda}|c_{\kappa,\lambda}\hbar^{|\kappa|}
q^{\lambda}\partial^{\kappa}\psi(z)|\leq
C_\epsilon\|f\|_\epsilon\|\psi\|_{a,b}
e^{-(a/2)|q/2|^\rho+b|2y|^{\rho'}}
 \label{8.10}
\end{equation}
if $\epsilon$ is sufficiently small compared to  $a$ and $b$.
Therefore, $\mathfrak f \psi\in W^{\rho'}_\rho(\oR^n)$. The
operator $\mathfrak f$ maps  $W^{\rho'}_\rho$ into itself
continuously, because the right-hand side of~\eqref{8.10} contains
the factor $\|\psi\|_{a,b}$. Moreover, the Kohn-Nirenberg map
$j_{\rm st}\colon f\to\mathfrak f$ is continuous from $\mathcal
E^\rho$ to $\mathcal A_{W^{\rho'}_\rho}(L^2)$, because the
right-hand side of~\eqref{8.10} contains   $\|f\|_\epsilon$.

We now show that if $f\in \mW^\rho(\oR^{2n})$, then the definition
of  $\mathfrak f$ by~\eqref{8.7} is equivalent to the above
definition~\eqref{6.9}. To accomplish this, we consider the matrix
element $\langle\varphi,\mathfrak f\,\psi\rangle$, where $\varphi,
\psi\in W^{\rho'}_\rho(\oR^n)$ and the angle brackets denote the
inner product of $L^2(\oR^n)$. The function
\begin{equation}
\chi(u,v)=\langle \varphi,e^{iu\mathfrak q}e^{-iv\mathfrak
p}\psi\rangle=\int\bar\varphi(q)e^{iuq}\psi(q-\hbar v)dq
 \notag
\end{equation}
belongs to $W^{\rho'}_\rho(\oR^{2n})$, because
$\bar\varphi\otimes\psi\in W^{\rho'}_\rho(\oR^{2n})$ and this
space is invariant under  linear changes of variables and under
the partial Fourier transform. From~\eqref{6.9}, we have
\begin{multline}
\langle\varphi,\mathfrak f\,\psi\rangle=\frac{1}{(2\pi)^n}\iint
\tilde f(u,v)\chi(u,v)dudv=\frac{1}{(2\pi)^n}\iint
f(p,q)\tilde\chi(-p,-q) dpdq=\\= \frac{1}{(2\pi)^n}\sum_{\kappa,
\lambda}c_{\kappa\lambda}\iint p^{\kappa}
q^{\lambda}\tilde\chi(-p,-q) dpdq,
 \notag
\end{multline}
where the order of summation and integration can be interchanged
because of absolute convergence. Taking the (inverse) symplectic
Fourier transform, we obtain
 \begin{equation}
\langle\varphi,\mathfrak f\,\psi\rangle=\sum_{\kappa,
\lambda}c_{\kappa\lambda}
(-i)^{|\kappa|}i^{\lambda|}\left(\partial^{\kappa}_v
\partial^{\lambda}_u\delta, \chi\right)=
\sum_{\kappa, \lambda}c_{\kappa\lambda}
(-i\hbar)^{|\kappa|}\int\bar \varphi(q) q^{\lambda}
\partial^{\kappa}\psi(q)dq.
 \notag
\end{equation}
Thus, the  definitions~\eqref{6.9} and~\eqref{8.7} are indeed
consistent. The star product~\eqref{6.10} is continuous in
$\mathcal E^\rho$ by Theorem~\ref{T2}, and the operator product is
separately continuous in $\mathcal A_{W^{\rho'}_\rho}(L^2)$ by the
definition of the topology of bounded convergence. Because
$\mW^\rho$  is dense in $\mathcal E^\rho$, we conclude that the
constructed extension  of  correspondence~\eqref{6.9} is an
algebra homomorphism from $(\mathcal E^\rho,
\star_{\hbar\vartheta})$ to $\mathcal A_{W^{\rho'}_\rho}(L^2)$.

We let $Q_{\rm st}$ denote the matrix
$-\dfrac12\left(\begin{matrix} 0&I_n\\I_n&0\end{matrix}\right)$
defining the map $j_{\rm st}$ and let $j_Q$ denote the
symbol-operator correspondence of general form~\eqref{6.6}. If
$f\in \mW^\rho$, then for any $Q$, $\tilde f(s)e^{s\cdot\tilde
Qs}$ is the symplectic Fourier transform of
$e^{-Q^{jk}\partial_j\partial_k}f$. Hence we have the relation
\begin{equation}
j_Q(f)=j_{\rm st} \left(e^{\tfrac{i\hbar}{2}(Q^{jk}_{\rm
st}-Q^{jk})
 \partial_j\partial_k}f\right),
  \label{8.11}
 \end{equation}
which holds  for all $f\in \mW^\rho$.  We can now extend $j_Q$ to
a map from $\mathcal E^\rho$ into $\mathcal
A_{W^{\rho'}_\rho}(L^2)$ by replacing  $j_{\rm st}$
in~\eqref{8.11} with its extension constructed above. The map
defined in this way is continuous, because the differential
operator on the right-hand side of~\eqref{8.11} is a continuous
automorphism of $\mathcal E^\rho$ by Theorem~\ref{T1}. This
completes the proof.
\end{proof}

\section{The Fock-Bargmann space representation}\label{S8}

The Wick correspondence is usually realized  using a
representation in the Fock-Bargmann space $\mF^2$, which consists
of antiholomorphic functions on $\oC^n$ satisfying the condition
\begin{equation}
\int|\varphi(\bar z)|^2e^{-\frac{1}{\hbar}\|z\|^2}d\mu(z)<\infty,
 \notag
\end{equation}
where $d\mu(z) =(2\pi i)^{-n}dz\wedge \bar z=\pi^{-n}d(\Re z) d
(\Im z)$ and $\|z\|^2=z\cdot\bar z=\sum_j|z_j|^2$. It is a Hilbert
space with inner product given by
\begin{equation}
\langle\varphi,\psi\rangle=\frac{1}{\hbar^n}\int\overline{\varphi(\bar
z)}\psi(\bar z)e^{-\tfrac{1}{\hbar}\|z\|^2}d\mu(z). \notag
\end{equation}
The  representation~\eqref{6.8} on the space $L^2$ can be
transferred  to a representation on $\mF^2$ by means of the
Bargmann transformation
\begin{equation}
\psi(q)\longrightarrow (\mB\psi)(\bar z)\eqdef (\pi \hbar)^{-n/4}
e^{\tfrac{1}{2\hbar}\bar z^2}\int e^{-\tfrac{1}{2\hbar}(\sqrt2\,
\bar z-q)^2}\psi(q)dq,
 \label{10.1}
\end{equation}
which is a unitary isomorphism from $L^2$ onto $\mF^2$
(see~\cite{18}, \cite{19}). When realized on  $\mF^2$, the
creation operator is  the multiplication by $\bar z$ and the
annihilation operator is $\hbar\,\partial_{\bar z}$. Accordingly,
the representation of the phase-space translations is realized  by
the operators
\begin{equation}
\mB T_w\mB^{-1}=\mathcal T_w=e^{\tfrac{1}{\hbar}w\bar z-\bar
w\partial_{\bar z}}=e^{-\tfrac{1}{2\hbar}w\bar
w}e^{\tfrac{1}{\hbar}w\bar z}e^{-\bar w\partial_{\bar z}},
 \notag
\end{equation}
and  the Wick correspondence takes the form
\begin{equation}
f \longmapsto \mathfrak f=\int \tilde f(w,\bar w)\, e^{w\bar z}
e^{-\hbar\bar w\partial_{\bar z}}\,d\mu(w),
 \label{10.2}
\end{equation}

Let $1< \rho\le2$ and $A>0$. We define $\bar E^{2,A}_\rho$ to be
the space of all entire antiholomorphic functions satisfying the
condition
\begin{equation}
|\varphi(\bar z)|\le C e^{A\|z\|^2-r\|z\|^\rho},
 \label{10.3}
\end{equation}
where $C$ and $r$  are constants depending on $\varphi$,  and we
give $\bar E^{2,A}_\rho$ the topology of the inductive limit  of
Banach spaces with the norms
\begin{equation}
\|\varphi\|_r=\sup_z|\varphi(\bar z)| e^{-A\|z\|^2+r\|z\|^\rho}.
 \label{10.4}
\end{equation}

\begin{theorem}\label{T7}  Suppose $1<\rho\le 2$ and
$\rho'=\rho/(\rho-1)$. Then the Bargmann transformation defined
by~\eqref{10.1} maps the space $W^{\rho'}_\rho$ isomorphically
onto $\bar E^{2,A}_\rho$, where $A=1/(2\hbar)$.
\end{theorem}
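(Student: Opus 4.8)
The plan is to exploit the fact, recalled above (see~\cite{18},~\cite{19}), that the Bargmann transformation $\mB$ is a unitary isomorphism of $L^2(\oR^n)$ onto $\mF^2$, so that $\mB^{-1}$ coincides with the adjoint integral operator
\begin{equation}
(\mB^{-1}\varphi)(q)=(\pi\hbar)^{-n/4}\hbar^{-n}\,
e^{-\tfrac1{2\hbar}q^2}\int e^{-\tfrac1{2\hbar}z^2
+\tfrac{\sqrt2}{\hbar}\,z\cdot q-\tfrac1\hbar\|z\|^2}\,\varphi(\bar z)\,d\mu(z).
\notag
\end{equation}
Since $W^{\rho'}_\rho\subset L^2$, and since the choice $A=1/(2\hbar)$ makes the weight in~\eqref{10.3} exactly compensate the Gaussian weight of $\mF^2$ so that $\bar E^{2,A}_\rho\subset\mF^2$, it suffices to establish the two inclusions $\mB(W^{\rho'}_\rho)\subset\bar E^{2,A}_\rho$ and $\mB^{-1}(\bar E^{2,A}_\rho)\subset W^{\rho'}_\rho$ together with the continuity of the corresponding restrictions. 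Indeed, these inclusions then give, for every $\varphi\in\bar E^{2,A}_\rho$, that $\mB(\mB^{-1}\varphi)=\varphi$ with $\mB^{-1}\varphi\in W^{\rho'}_\rho$, so that $\mB$ restricts to a bijection of $W^{\rho'}_\rho$ onto $\bar E^{2,A}_\rho$, which is a topological isomorphism by bicontinuity.

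For the first inclusion, let $\psi\in W^{\rho',b}_{\rho,a}$ and write $w=\bar z=u+iv$ with $u,v\in\oR^n$, so that $\|z\|^2=u^2+v^2$. Since $\psi$ extends to an entire function decreasing like $e^{-a|q|^\rho}$ along the real directions, the contour in~\eqref{10.1} may be shifted by $it$ with $t\in\oR^n$ arbitrary; using $\Re(\sqrt2\,w-s-it)^2=(\sqrt2\,u-s)^2-(\sqrt2\,v-t)^2$, $|e^{w^2/2\hbar}|=e^{(u^2-v^2)/2\hbar}$, the identity $-\tfrac1{2\hbar}v^2+\tfrac1{2\hbar}(\sqrt2\,v-t)^2=\tfrac1{2\hbar}v^2-\tfrac{\sqrt2}{\hbar}v\cdot t+\tfrac1{2\hbar}t^2$, and the elementary bound $\int e^{-\tfrac1{2\hbar}(\sqrt2\,u-s)^2-a|s|^\rho}ds\le C\,e^{-r|u|^\rho}$ (valid for $1<\rho\le2$ with some $r>0$), one gets
\begin{equation}
|(\mB\psi)(\bar z)|\le C\,\|\psi\|_{a,b}\,
e^{\,\tfrac1{2\hbar}(u^2+v^2)-r|u|^\rho
-\tfrac{\sqrt2}{\hbar}\,v\cdot t+\tfrac1{2\hbar}t^2+b|t|^{\rho'}}.
\notag
\end{equation}
Now choose $t_j=c\,|v_j|^{\rho-2}v_j$, so that $v\cdot t$ is of order $|v|^\rho$, $t^2$ of order $|v|^{2(\rho-1)}$, and $|t|^{\rho'}$ of order $|v|^\rho$ because $(\rho-1)\rho'=\rho$; invoking the hypothesis $\rho\le2$, which yields $2(\rho-1)\le\rho$, one sees that for $c>0$ small enough the quantity $-\tfrac{\sqrt2}{\hbar}v\cdot t+\tfrac1{2\hbar}t^2+b|t|^{\rho'}$ does not exceed $-r'|v|^\rho+C_1$ for some $r'>0$ and a constant $C_1$. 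Since $\|z\|^\rho\le C(|u|^\rho+|v|^\rho)$ and $A=1/(2\hbar)$, this gives $|(\mB\psi)(\bar z)|\le C\,\|\psi\|_{a,b}\,e^{A\|z\|^2-r''\|z\|^\rho}$ with $r''>0$, proving $\mB\psi\in\bar E^{2,A}_\rho$ and the continuity of $\mB$ from each $W^{\rho',b}_{\rho,a}$ into a Banach space constituting $\bar E^{2,A}_\rho$, hence of $\mB\colon W^{\rho'}_\rho\to\bar E^{2,A}_\rho$.

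The second inclusion is proved by an analogous estimate on the above integral formula for $\mB^{-1}$. For $\varphi$ satisfying~\eqref{10.3}, put $z=z_1+iz_2$ and evaluate $\mB^{-1}\varphi$ at $q+iy$; collecting the real parts of all exponents one finds that the quadratic terms in $z_2$ cancel, and using $\|z\|^\rho\ge\tfrac12(\|z_1\|^\rho+\|z_2\|^\rho)$ one is left with
\begin{equation}
|(\mB^{-1}\varphi)(q+iy)|\le C\,\|\varphi\|_r\,
e^{-\tfrac1{2\hbar}q^2+\tfrac1{2\hbar}y^2}
\!\int\! e^{-\tfrac1\hbar z_1^2+\tfrac{\sqrt2}{\hbar}z_1\cdot q-\tfrac r2\|z_1\|^\rho}dz_1
\!\int\! e^{-\tfrac{\sqrt2}{\hbar}z_2\cdot y-\tfrac r2\|z_2\|^\rho}dz_2 .
\notag
\end{equation}
Completing the square in the first integral and applying the same elementary bound as before produces a factor $\le C\,e^{\tfrac1{2\hbar}q^2-r_1|q|^\rho}$, which cancels $e^{-q^2/2\hbar}$ up to the decay $e^{-r_1|q|^\rho}$; the second integral is estimated by the Young inequality $\sup_{z_2}(-\tfrac{\sqrt2}{\hbar}z_2\cdot y-\tfrac r4\|z_2\|^\rho)\le c_2|y|^{\rho'}$ (finite because $\rho>1$), giving $\le C\,e^{c_2|y|^{\rho'}}$. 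Absorbing the remaining $e^{y^2/2\hbar}$ into $e^{c'_2|y|^{\rho'}}$, which is possible because $\rho'\ge2$, we obtain $|(\mB^{-1}\varphi)(q+iy)|\le C\,\|\varphi\|_r\,e^{-r_1|q|^\rho+c'_2|y|^{\rho'}}$, so $\mB^{-1}\varphi\in W^{\rho'}_\rho$ and $\mB^{-1}\colon\bar E^{2,A}_\rho\to W^{\rho'}_\rho$ is continuous. All the contour shifts and interchanges of the order of integration used above are justified by the rapid decay of the integrands, so combining the two continuous inclusions with the first paragraph completes the proof. The main obstacle is the first estimate: one must shift the contour of the Bargmann integral by an amount tied to the imaginary part of the argument in order to convert the prescribed growth $e^{b|y|^{\rho'}}$ of the elements of $W^{\rho'}_\rho$ into decay measured by $\|z\|^\rho$, and the matching of the resulting Gaussian and subquadratic exponents succeeds precisely because $\rho\le2$ — this is exactly where the hypothesis on $\rho$ enters.
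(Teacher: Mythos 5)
Your proof is correct, and the overall logical skeleton (two continuous inclusions $\mB(W^{\rho'}_\rho)\subset\bar E^{2,A}_\rho$ and $\mB^{-1}(\bar E^{2,A}_\rho)\subset W^{\rho'}_\rho$, combined with the unitarity of $\mB\colon L^2\to\mF^2$ to get a bicontinuous bijection) is the same as in the paper. The method of estimation, however, is genuinely different. The paper factors $\mB$ into a convolution by $e^{-\bar z^2/2\hbar}$, a dilation, and a multiplication by $e^{\bar z^2/2\hbar}$, passes to the Fourier--Laplace transform side (using the invariance of $W^{\rho'}_\rho$ under the Fourier transformation), and reduces the matching of the Gaussian and subquadratic exponents to the inequality $(y-\sigma)^2\ge \epsilon'\|y\|^\rho-a''\|\sigma\|^\rho$, proved by restricting to rays $y=t\sigma$; the reverse direction is handled by choosing the optimal shift $y=\hbar\sigma$ in a Laplace-transform estimate. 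You instead work directly with the integral kernels of $\mB$ and of its adjoint $\mB^{-1}=\mB^*$ in position space, and produce the decay $e^{-r'|v|^\rho}$ in the imaginary direction by an explicit contour shift $t_j=c|v_j|^{\rho-2}v_j$ calibrated to the imaginary part of the argument, with the hypothesis $\rho\le2$ entering through $2(\rho-1)\le\rho$ rather than through the ray comparison. Your route has the advantage of avoiding the Fourier-invariance input and of exhibiting very concretely where $\rho\le2$ is used; the paper's route recycles machinery (the Young-dual computations of Theorem~\ref{T4}) already set up earlier and keeps all estimates one-dimensional in spirit. Two small points worth making explicit in a final write-up: the identity $\mB\mB^{-1}=\mathrm{id}$ on $\bar E^{2,A}_\rho$ requires the (easy) observation $\bar E^{2,A}_\rho\subset\mF^2$, which you do state, and the elementary bound $\int e^{-(\sqrt2 u-s)^2/2\hbar-a|s|^\rho}\,ds\le Ce^{-r|u|^\rho}$ itself uses $\rho\le2$ on the Gaussian tail, so the hypothesis enters in two places, not just one.
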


\begin{proof}
The  transformation~\eqref{10.1} is the composition of the
following three operations: (1)  convolution by the function
$e^{-\tfrac{1}{2\hbar}\bar z^2}$, (2)  dilation by a factor of
$\sqrt2$, and (3) multiplication by  $(\pi \hbar)^{-n/4}
e^{\tfrac{1}{2\hbar}\bar z^2}$.  The first operation is equivalent
to the multiplication of $\hat \psi(\sigma)$ by
$ce^{-\tfrac{\hbar}{2}\sigma^2}$, where
$\sigma^2=\sum_j\sigma_j^2$ and the precise value of the constant
$c$ is of no importance for the proof. Let $\hat\psi\in
W^{\rho',b}_{\rho,a}(\oR^n)$ and
$\hat\psi_1(\sigma)=\hat\psi(\sigma)e^{-\tfrac{\hbar}{2}\sigma^2}$.
Then we have
\begin{equation}
|\hat\psi_1(\sigma+i\tau)|\le \|\hat\psi\|_{a,b}
e^{-\tfrac{\hbar}{2}(\sigma^2-\tau^2) -a|\sigma|^\rho+
b|\tau|^{\rho'}}. \notag
\end{equation}
Taking the inverse Laplace transform, setting   $z=x+iy$, and
using the Cauchy-Poincar\'e theorem, we get
\begin{multline}
|\psi_1(\bar z)|\le (2\pi)^{-n/2}\|\hat\psi\|_{a,b}\inf_\tau\int
\left|e^{i(x-iy)(\sigma+i\tau)}\psi_1(\sigma+i\tau)\right|d\sigma\le\\\le
C_{a'}\|\hat\psi\|_{a,b}\inf_\tau
e^{-x\tau+\tfrac{\hbar}{2}\tau^2+b|\tau|^{\rho'}} \sup_\sigma
e^{y\sigma-\tfrac{\hbar}{2}\sigma^2 -a'\|\sigma\|^\rho},
 \notag
\end{multline}
where $a'<a/n^{\rho/2}$. Because $\rho'\ge2$, we have $\tau^2\le
n+\sum_j|\tau_j|^{\rho'}$, and the infimum over $\tau$ can hence
be estimated in a manner similar to that used in proving
Theorem~\ref{T4}. Setting $B'=b+\hbar/2$, we obtain
\begin{equation}
\inf_\tau
\left(-x\tau+\frac{\hbar}{2}\tau^2+b|\tau|^{\rho'}\right)\le
\frac{n\hbar}{2}\,+\inf_\tau \left(-x\tau+
B'\sum_j|\tau_j|^{\rho'}\right)\le \frac{n\hbar}{2} -B|x|^\rho,
 \notag
\end{equation}
where $(B'\rho')^\rho(B\rho)^{\rho'}=1$. We now show that there
exists a positive constant $\epsilon$ such that
\begin{equation}
y\sigma-\frac{\hbar}{2}\sigma^2 -a'\|\sigma\|^\rho\le
\frac{1}{2\hbar}y^2 -\epsilon\|y\|^\rho\qquad\text{for}\quad
\|\sigma\|\ge\sqrt{\hbar/2}.
 \label{10.7}
\end{equation}
After a suitable rescaling of the variables, \eqref{10.7} takes
the form
\begin{equation}
(y-\sigma)^2\ge \epsilon'\|y\|^\rho -a''\|\sigma\|^\rho,
 \label{10.8}
\end{equation}
where $\epsilon'=(2\hbar)^{\rho/2}\epsilon$,
$a''=(2/\hbar)^{\rho/2} a'$, and $\|\sigma\|\ge 1$. Let us
consider the ray $y=t\sigma$, $t\ge0$ in  $\oR^{2n}$. Because
$\rho\le2$ and $\|\sigma\|\ge 1$, the inequality~\eqref{10.8}
holds  on this ray if
\begin{equation}
(t-1)^2\ge \epsilon' t^\rho -a''.
 \notag
\end{equation}
This condition in turn is satisfied for all $t$ if
$(t-1)^2\ge\epsilon' t^2-a''$, or equivalently,  $\epsilon'\le
a''/(1+a'')$.  We note that for all $\sigma$ in the ball
$\|\sigma\|\le1$, we have $(y-\sigma)^2\ge
\epsilon'\|y\|^\rho-C_{\epsilon'}$ with some constant
$C_{\epsilon'}>0$. Thus, there exists a sufficiently small
positive number $r(a,b,\hbar)$ such that
\begin{equation}
|\psi_1(x-iy)|\le
C'_{a'}\|\hat\psi\|_{a,b}e^{\tfrac{1}{2\hbar}y^2-
r\|x\|^\rho-r\|y\|^\rho}.
 \label{10.10}
\end{equation}
Therefore, the function $\varphi(\bar z)=e^{\tfrac{1}{2\hbar}\bar
z^2}\psi_1(\sqrt2\,\bar z)$ satisfies~\eqref{10.3} with
$A=1/(2\hbar)$ and with a preexponential factor proportional to
$\|\hat\psi\|_{a,b}$. We conclude that the Bargmann operator $\mB$
maps $W^{\rho'}_\rho$ continuously to $\bar E^{2, 1/2\hbar}_\rho$.

Conversely, let $\varphi\in\bar E^{2, 1/2\hbar}_\rho$ and
$\|\varphi\|_r<\infty$. Then the function $\varphi_1(\bar
z)=e^{-\tfrac{1}{4\hbar}\bar z^2}\varphi(\bar z/\sqrt2)$ satisfies
\begin{equation}
|\varphi_1(\bar z)|\le\|\varphi\|_re^{\tfrac{1}{2\hbar}y^2-
r'\|x\|^\rho-r'\|y\|^\rho},
 \notag
\end{equation}
where $r'=r/2^{1+\rho/2}$.  Hence its Fourier-Laplace transform
satisfies the estimate
\begin{multline}
|\hat\varphi_1(\sigma+i\tau)|\le (2\pi)^{-n/2}\inf_y\int
\left|\,e^{-i(x-iy)(\sigma+i\tau)}\varphi_1(x-iy)\right|dx\le\\\le
C_b\|\varphi\|_r\inf_y
e^{-y\sigma+\tfrac{1}{2\hbar}y^2-r'\|y\|^\rho}
\sup_xe^{\|x\|\cdot\|\tau\|-b\|x\|^\rho},
 \label{10.11}
\end{multline}
with $b<r'$. Substituting $y=\hbar \sigma$, we obtain
\begin{equation}
\inf_y \left(-y\sigma+\frac{1}{2\hbar}y^2-r'\|y\|^\rho\right)\le
-\frac{\hbar}{2}\sigma^2-r'\hbar^\rho\|\sigma\|^\rho.
 \notag
\end{equation}
Taking the supremum over $x$ gives the conjugate convex function
$b'\|\tau\|^{\rho'}$. Therefore, it follows from~\eqref{10.11}
that
\begin{equation}
\left|\,e^{\tfrac{\hbar}{2}(\sigma+i\tau)^2}\hat\varphi_1(\sigma+i\tau)\right|
\le C\|\varphi\|_r e^{-a|\sigma|^\rho+b''|\tau|^{\rho'}},
 \notag
\end{equation}
where $a=r'\hbar^\rho$ and $b''=b'n^{\rho'/2}$. We conclude that
$\mB^{-1}\varphi\in W^{\rho'}_\rho$ and the inverse Bargmann
transformation is continuous from $\bar E^{2,1/2\hbar}_\rho$ to
$W^{\rho'}_\rho$. The theorem is proved.
\end{proof}

\begin{corollary}\label{C1}  For any $\rho$ satisfying $1<\rho\le2$, the
Wick correspondence~\eqref{10.2} has a unique extension to a
continuous algebra homomorphism from $(\mathcal E^\rho(\oR^n),
\star_{Wick})$ to $\mathcal A_{\bar E^{2,A}_\rho}(\mF^2)$, where
$A=1/(2\hbar)$.
\end{corollary}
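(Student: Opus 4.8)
The plan is to derive the statement by transporting to the Fock--Bargmann space the $L^2$-representation already built in \refT{6}, using the isomorphism of \refT{7}. The starting observation is that the Wick correspondence~\eqref{10.2} on $\mF^2$ is, by construction, exactly the conjugate $\mB\,\mathfrak f\,\mB^{-1}$ of the $L^2$-realization~\eqref{6.12} of the general symbol--operator map~\eqref{6.6} specialized to the normal ordering: under $\mB$ the creation and annihilation operators become multiplication by $\bar z$ and $\hbar\partial_{\bar z}$, and $(2\pi i)^{-n}dw\wedge d\bar w=d\mu(w)$, so~\eqref{6.12} turns into~\eqref{10.2} upon conjugating the integrand by $\mB$. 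Since the matrix $\widetilde Q$ implementing normal ordering comes from a $\vartheta$ of the form~\eqref{2.3}, \refT{6} already provides a \emph{unique} continuous algebra homomorphism $j_{Wick}\colon(\mathcal E^\rho(\oR^{2n}),\star_{Wick})\to\mathcal A_{W^{\rho'}_\rho}(L^2)$ extending the restriction of~\eqref{6.12} to $\mW^\rho$. (Here $\mathcal E^\rho(\oR^{2n})$ is identified, as in \refS{6}, with the algebra of entire functions of the independent variables $z,\bar z$; passing from $(p,q)$ to $(z,\bar z)$ is a linear change of variables, under which $\mathcal E^\rho$, $\mW^\rho$, and $\star_\vartheta$ are invariant.)

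The next step is to observe that conjugation by $\mB$ is a topological algebra isomorphism from $\mathcal A_{W^{\rho'}_\rho}(L^2)$ onto $\mathcal A_{\bar E^{2,A}_\rho}(\mF^2)$, with $A=1/(2\hbar)$. By \refT{7} the Bargmann operator $\mB$ is a linear homeomorphism of $W^{\rho'}_\rho$ onto $\bar E^{2,A}_\rho$; hence $\mathfrak f\mapsto\mB\,\mathfrak f\,\mB^{-1}$ sends continuous endomorphisms of $W^{\rho'}_\rho$ bijectively onto continuous endomorphisms of $\bar E^{2,A}_\rho$ and intertwines operator composition, so it is an algebra isomorphism. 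Moreover $\mB$ and $\mB^{-1}$, being continuous linear maps, carry bounded sets to bounded sets and pull continuous seminorms back to continuous seminorms, so the conjugation is a homeomorphism for the topologies of uniform convergence on bounded subsets. Composing $j_{Wick}$ with this conjugation therefore yields a continuous algebra homomorphism $(\mathcal E^\rho(\oR^{2n}),\star_{Wick})\to\mathcal A_{\bar E^{2,A}_\rho}(\mF^2)$.

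It then remains to check that this composite extends~\eqref{10.2} and that it is the only such extension. For $f\in\mW^\rho$ the operator $\mathfrak f$ produced by \refT{6} on $L^2$ is the operator of~\eqref{6.12} (this is precisely the matching of definitions established in the proof of \refT{6} via the intertwining relation between $j_Q$ and $j_{\rm st}$); conjugating it by $\mB$ gives, by the first paragraph, the operator of~\eqref{10.2}, the interchange of the $\mB$-conjugation with the integral being licensed by the absolute-convergence/Fubini argument already used for \refT{5}. Thus the composite agrees with~\eqref{10.2} on $\mW^\rho$. Uniqueness follows exactly as in \refT{6}: $\mW^\rho$ is dense in $\mathcal E^\rho$ by \refP{2}, $\star_{Wick}$ is jointly continuous on $\mathcal E^\rho$ by \refT{2}, and the operator product is separately continuous in $\mathcal A_{\bar E^{2,A}_\rho}(\mF^2)$, so two continuous homomorphisms coinciding on the dense subalgebra $\mW^\rho$ coincide everywhere.

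The only point in this argument that is not bookkeeping is the one already settled by \refT{7} --- the matching of the rescaled Gaussian weights that makes $\mB$ a homeomorphism of $W^{\rho'}_\rho$ onto $\bar E^{2,A}_\rho$ with precisely $A=1/(2\hbar)$. Granting that, everything else (the conjugation of the two operator algebras together with their bounded-convergence topologies, and the density/continuity argument for uniqueness) should take only a few lines, and I anticipate no further obstacle.
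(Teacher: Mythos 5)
Your proposal is correct and is essentially the argument the paper intends: the corollary is deduced from \refT{6} (the unique continuous extension of the normal-ordering representation to $(\mathcal E^\rho,\star_{Wick})$ acting on $W^{\rho'}_\rho\subset L^2$) by conjugating with the Bargmann operator, which by \refT{7} is a homeomorphism of $W^{\rho'}_\rho$ onto $\bar E^{2,A}_\rho$ with $A=1/(2\hbar)$ and hence induces a topological algebra isomorphism $\mathcal A_{W^{\rho'}_\rho}(L^2)\cong\mathcal A_{\bar E^{2,A}_\rho}(\mF^2)$. The density and separate-continuity argument you give for uniqueness is the same one used in the proof of \refT{6}.
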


\section{Concluding remarks}\label{S9}

In this paper, we  consider only  analytic symbols of operators,
but the above method of extending the star products and their
representations by continuity, starting from a suitable auxiliary
space like  $\mW^\rho$ or $W_\rho^{\rho'}$, has a wider field of
application.  In particular, this approach allows us to extend the
Weyl symbol calculus  beyond the traditional framework of tempered
distributions, see~\cite{20}, \cite{21}. Such a generalization is
also desirable for the anti-Wick correspondence because not all
bounded operators have anti-Wick symbols among the usual
functions~\cite{18}.

If $\rho>2$, then the  space $W_\rho^{\rho'}$ is not an algebra
under the Wick star product. However it is a topological algebra
with respect to the Weyl-Moyal product $\star_\hbar$. It follows
from a theorem of~\cite{17} that the elements of $\mathcal E^\rho$
are two-sided multipliers of the algebra
$(S^{1/\rho}_{1/\rho},\star_\hbar)$ which coincides with the
algebra $(W_\rho^{\rho'},\star_\hbar)$ for  $\rho>1$. This implies
that for all $f\in \mathcal E^\rho$ and $\varphi\in
W_\rho^{\rho'}$, the products $f\star_\hbar\varphi$ and
$\varphi\star_\hbar f$ belong to $W_\rho^{\rho'}$ and are
continuous in $f$ and $\varphi$. Furthermore, the following
associativity relations hold:
\begin{equation}
(f\star_\hbar\varphi)\star_\hbar\psi=f\star_\hbar(\varphi\star_\hbar\psi),
\quad (\varphi\star_\hbar
f)\star_\hbar\psi=\varphi\star_\hbar(f\star_\hbar\psi), \quad
(\varphi\star_\hbar \psi)\star_\hbar
f=\varphi\star_\hbar(\psi\star_\hbar f).
  \notag
 \end{equation}
As shown  in~\cite{21}, the Weyl correspondence defines an
isomorphism between the left multiplier algebra of
$(W_\rho^{\rho'},\star_\hbar)$ and the operator algebra $\mathcal
A_{W^{\rho'}_\rho}(L^2)$. Therefore, for the case of Weyl-Moyal
product, Theorem~\ref{T6} can be derived as a consequence of these
results. The direct proof given here is more illuminating and
applies to any star product of the form~\eqref{2.2}.

Finally, we note that along  with the spaces $\mathcal E^\rho$, it
is natural to consider the spaces of entire functions of the same
order and finite type, i.e., satisfying the condition
\begin{equation}
|f(z)|\le Ce^{a|z|^\rho},
  \notag
 \end{equation}
where $a$ and $C$  are constants depending on $f$. We let $E^\rho$
denote this space and  endow it with the inductive limit topology
defined by the family of Banach spaces $E^{\rho, a}$ with the
norms $\|f\|_a=\sup_z|f(z)|e^{-a|z|^\rho}$.  It is easy to see
that an analogue of Theorem~\ref{T2} holds for $E^\rho$, but only
if the strong inequality $\rho<2$ holds. It is essential that if
$1/\rho'+1/\rho=1$ and $\rho\le2$, then the series defining the
star product $f\star_\vartheta g$ converges absolutely in
$E^{\rho'}$ for all $f\in\mathcal E^\rho$ and $g\in E^{\rho'}$.
Moreover, it can be proved that the algebra $(\mathcal E^\rho,
\star_\vartheta)$ acts continuously on the space $E^{\rho'}$ and
on its subspace $W^{\rho'}_\rho$ by the left and right
$\star_\vartheta$-multiplication. In other words, the function
spaces  $E^{\rho'}$ and $W^{\rho'}_\rho$ (as well as the analogous
spaces associated with cone-shaped regions) are topological
bimodules over the algebra $(\mathcal E^\rho, \star_\vartheta)$
for any $\vartheta$.

\begin{acknowledgements}  This paper was supported in part by the the
Russian Foundation for Basic Research (Grant No.~12-01-00865)
\end{acknowledgements}

\end{document}